\newcommand{\subparagraph}{}
\titlespacing*{\section}{0pt}{0.8\baselineskip}{1\baselineskip}
\def\mindex#1{\index{#1}}
\def\sq{\hbox{\rlap{$\sqcap$}$\sqcup$}}
\def\qed{\ifmmode\sq\else{\unskip\nobreak\hfil
\penalty50\hskip1em\null\nobreak\hfil\sq
\parfillskip=0pt\finalhyphendemerits=0\endgraf}\fi\medskip}
\long\def\defbox#1{\framebox[.9\hsize][c]{\parbox{.85\hsize}{%
\parindent=0pt
\baselineskip=12pt plus .1pt      
\parskip=6pt plus 1.5pt minus 1pt 
 #1}}}
\long\def\beginbox#1\endbox{\subsection*{}%
\hbox{\hspace{.05\hsize}\defbox{\medskip#1\bigskip}}%
\subsection*{}}
\def\endbox{}
\def\diag{{\text{diag}}}
\def\supp{{\rm supp\,}}
\newsavebox{\junk}
\savebox{\junk}[1.6mm]{\hbox{$|\!|\!|$}}
\def\argmin{\mathop{\rm arg\, min}}
\def\Re{\field{R}}
\def\bC{{\mathbb C}}
\def\bE{{\mathbb E}}
\def\bL{{\mathbb L}}
\def\bP{{\mathbb P}}
\def\bR{{\mathbb R}}
\def\bT{{\mathbb T}}
\def\bZ{{\mathbb Z}}
\def\bfG{{\bf G}}
\def\bfH{{\bf H}}
\def\bfI{{\bf I}}
\def\bfK{{\bf K}}
\def\bfX{{\bf X}}
\def\bfa{{\bf a}}
\def\bfb{{\bf b}}
\def\bfs{{\bf s}}
\def\bfu{{\bf u}}
\def\bfw{{\bf w}}
\def\bfx{{\bf x}}
\def\bfy{{\bf y}}
\def\bfz{{\bf z}}
\def\scrC{{\mathscr{C}}}
\def\scrI{{\mathscr{I}}}
\def\scrK{{\mathscr{K}}}
\def\scrR{{\mathscr{R}}}
\def\sfH{{\sf H}}
\def\sfK{{\sf K}}
\def\bfmath#1{{\mathchoice{\mbox{\boldmath$#1$}}%
{\mbox{\boldmath$#1$}}%
{\mbox{\boldmath$\scriptstyle#1$}}%
{\mbox{\boldmath$\scriptscriptstyle#1$}}}}
\def\bfmY{\bfmath{Y}}
\def\bfmhhaY{\bfmath{\hhaY}} 
\def\bfmhhaY{\hbox to 0pt{$\widehat{\bfmY}$\hss}\widehat{\phantom{\raise 1.25pt\hbox{$\bfmY$}}}}
\def\til={{\widetilde =}}
\def\clA{{\cal A}}
\def\clB{{\cal B}}
\def\clC{{\cal C}}
\def\clE{{\cal E}}
\def\clG{{\cal G}}
\def\clI{{\cal I}}
\def\clK{{\cal K}}
\def\clL{{\cal L}}
\def\clN{{\cal N}}
\def\clP{{\cal P}}
\def\clR{{\cal R}}
\def\clS{{\cal S}}
\def\clU{{\cal U}}
\def\clW{{\cal W}}
\def\clX{{\cal X}}
 \def\FRAC#1#2#3{\genfrac{}{}{}{#1}{#2}{#3}}
\def\ddtp{{\mathchoice{\FRAC{1}{d^{\hbox to 2pt{\rm\tiny +\hss}}}{dt}}%
{\FRAC{1}{d^{\hbox to 2pt{\rm\tiny +\hss}}}{dt}}%
{\FRAC{3}{d^{\hbox to 2pt{\rm\tiny +\hss}}}{dt}}%
{\FRAC{3}{d^{\hbox to 2pt{\rm\tiny +\hss}}}{dt}}}}
\def\average#1,#2,{{1\over #2} \sum_{#1}^{#2}}
\def\eye(#1){{\bf(#1)}\quad}
\newtheorem{theorem}{{\bf Theorem}}[section]
\newtheorem{proposition}[theorem]{{\bf Proposition}}
\newtheorem{lemma}[theorem]{{\bf Lemma}}
\def\eq#1/{(\ref{e:#1})}
\newcommand{\inp}[2]{{\langle #1, #2 \rangle}}
\newcommand{\inpr}[2]{{\langle #1, #2 \rangle}_\bR}
\newcommand{\beqn}[1]{\notes{#1}%
\begin{eqnarray} \elabel{#1}}
\newcommand{\eeqn}{\end{eqnarray} }
\newcommand{\beq}[1]{\notes{#1}%
\begin{equation}\elabel{#1}}
\newcommand{\eeq}{\end{equation}}
\def\bdes{\begin{description}}
\def\edes{\end{description}}
\newcounter{rmnum}
\newcounter{anum}
\def\ass(#1:#2){(#1\ref{#1:#2})}
\def\ritem#1{
\item[{\sf \ass(\current_model:#1)}]
}
\newenvironment{recall-ass}[1]{%
\begin{description}
\def\current_model{#1}}{
\end{description}
}
\pgfplotsset{compat=newest}
\def\herm{{\sfH}}
\def\snr{{\mathsf{snr}}}
\newcommand{\mre}[1]{\Re\hspace{-1mm}\left[#1\right]}
\def\cg{{\clC\clN}}
\def\vol{{\mathsf{vol}}}
\def\RF{{\clX^n_{\mathsf{RF}}}}
\def\OC{{\clX^n_{\mathsf{OC}}}}
\def\OOC{{\overline{\clX^n_{\mathsf{OC}}}}}
\def\si{{\mathbf{s}}}
\def\cs{{\mathbf{c}}}
\def\conv{{\mathsf{co}}}
\def\closure{{\mathsf{cl}}}
\def\dsum{{\mathsf{sud}}}
\def\clSu{{\clS_+^\circ}}
\long\def\comment#1{}
\newfont{\bbb}{msbm10 scaled 700}
\newfont{\bb}{msbm10 scaled 1100}
\newcommand{\Sigmam}{\hbox{\boldmath$\Sigma$}}
\newcommand{\trace}{{\hbox{tr}}}
\renewcommand{\Re}{{\rm Re}}
\renewcommand{\Im}{{\rm Im}}
\newcommand{\transp}{{\sf T}}
\begin{document}

\title{\vspace{-0.2cm} Capacity and Degree-of-Freedom of OFDM Channels with Amplitude Constraint}
\author{\IEEEauthorblockN{Saeid Haghighatshoar\IEEEauthorrefmark{1},
Peter Jung\IEEEauthorrefmark{1}, Giuseppe Caire\IEEEauthorrefmark{1}}\\
\IEEEauthorblockA{\IEEEauthorrefmark{1}\IEEEauthorblockA{Communications and Information Theory Group, Technische Universit\"{a}t Berlin}}\\
Emails: saeid.haghighatshoar@tu-berlin.de, peter.jung@tu-berlin.de, caire@tu-berlin.de}

\maketitle

\begin{abstract}
In this paper, we study the capacity and degree-of-freedom (DoF) scaling for the continuous-time amplitude limited AWGN channels in radio frequency (RF) and intensity modulated optical communication (OC) channels. More precisely, we study how the capacity varies in terms of the OFDM block transmission time $T$, bandwidth $W$, amplitude $A$ and the noise spectral density $\frac{N_0}{2}$. We first find suitable discrete encoding spaces for both cases, and prove that they are convex sets that have a semi-definite programming (SDP) representation. Using tools from convex geometry, we find lower and upper bounds on the volume of these encoding sets, which we exploit to drive pretty sharp lower and upper bounds on the capacity. We also study a practical Tone-Reservation (TR) encoding algorithm and prove that its performance can be characterized by the statistical width of an appropriate convex set. Recently, it has been observed that in high-dimensional estimation problems under constraints such as those arisen in Compressed Sensing (CS) statistical width plays a crucial role. We discuss some of the implications of the resulting statistical width on the performance of the TR. We also provide numerical simulations to validate these observations.
\end{abstract}
\begin{keywords}
Peak-to-Average-Power-Ratio (PAPR), Radio Frequency Channel (RF), Optical Intensity-Modulated Channel (OC), Orthogonal Frequency Division Modulation (OFDM).
\end{keywords}

\section{Introduction}\label{inro}
Shannon in his seminal paper \cite{shannon2001mathematical} derived the capacity per unit-time of the \textit{continuous-time additive white Gaussian noise} channel (CTAWGN) under input power constraint. The capacity is given by $W\log_2(1+\snr)$, where $W$ is the  bandwidth, and where $\snr=\frac{P}{N_0 W}$; $P$ denotes the input power and $\frac{N_0}{2}$ denotes the power spectral density of the white Gaussian  noise. He used the sampling theorem for the band-limited signals and the capacity formula $\frac{1}{2}\log_2(1+\frac{P}{\sigma^2})$ for the discrete-time AWGN (DTAWGN) under the input power constraint $P$ and noise variance $\sigma^2$. 
A more rigorous proof was later given by Slepian \cite{slepian1983some}, who introduced the Prolate spheroidal wave functions and proved the well-known $2WT$-result for the dimension of the signal space \textit{essentially} time-limited to $T$ and \textit{essentially} band-limited to $W$. 
In brief, the result states that for moderately large \textit{signal-to-noise ratio} (SNR), the capacity scales like $WT \log_2(\snr)$, and the \textit{degree-of-freedom} (DoF) for sufficiently high SNR is given by $WT$. 

Although the power-constraint is suitable for theoretical analysis and signal design (due to isometry of the underlying Hilbert spaces), in many practical scenarios in communications systems the front end of the CTAWGN channel is a power amplifier with a limited amplitude $A$. Thus, it is important to know how the capacity varies under the amplitude constraint. This requires studying the DoF of band-limited signals in $\bL_\infty[0,T]$ rather than $\bL_2[0,T]$.

For the discrete-time case, the amplitude-limited variant of the problem was studied by Smith \cite{smith1971information}, who also showed that the capacity achieving input distribution is discrete and has a finite support. 
The result was extended to the complex-valued Gaussian channels in \cite{shamai1995capacity}, and bounds on the capacity were derived in \cite{mckellips2004simple}. Recently, tighter bounds were obtained in \cite{thangaraj2015capacity} via a dual capacity result stated in \cite{csiszar2011information}. 
An interesting lower-bound was given in \cite{farrell2009kashin} by using a recent result of Lyubarskii and Vershynin \cite{lyubarskii2010uncertainty} on the existence of tight frames, which in turn uses a deep result of Kashin on the comparison of diameter of certain subsets of Banach spaces under different norms \cite{kashin1977diameters}.  In brief, the main idea in \cite{farrell2009kashin} is to admit some rate loss to transform codewords designed for the power-limited DTAWGN channel (codewords with limited $\ell_2$-norm) into codewords with limited amplitude (limited $\ell_\infty$-norm) suitable for amplitude-limited DTAWGN. 
However, in contrast with the power-limited case, where the discrete-time and the continuous-time variants are related via Hilbert space isometries, the results in \cite{farrell2009kashin} does not directly extend to the continuous-time case due to the lack of obvious isometry between $\ell_\infty(\bZ)$ and $\bL_\infty[0,T]$.

In this paper, we study the continuous-time variant of the problem when the transmitter uses an orthonormal collection of waveforms $\clW^n=\{\psi_k(t)\}_{k=1}^n$, for signal modulation. The $n$-dim encoding space for $\clW^n$, called $\clX^n\subset \bC^n$, consists of all coefficients $\bfx=(x_1,x_2, \dots,x_n)\in \bC^n$ for which the transmitted waveform $\sum_{k \in [n]} x_k \psi_k(t)$ has a limited amplitude $A$ for all $t\in[0,T]$, where $[n]$ denotes the set of integers $\{1,2,\dots,n\}$. We focus on the signal space of harmonic waveforms  $\clW^n_\sfH=\{e^{jk \frac{2\pi}{T} t}\}_{k=1}^n$, known as \textit{orthogonal frequency division modulation} (OFDM). 
It has been vastly studied in the literature due to its simple implementation (with FFT algorithm), robustness to multi-path fading in wireless scattering channels, and its information theoretic optimality for water-filling type encoding under frequency-selective Gaussian channels \cite{nee2000ofdm}. 

Coding for amplitude-limited OFDM channels is related to the well-known \textit{peak-to-average power ratio} (PAPR) reduction problem. In brief, as the communication is over the AWGN channel, the performance is an increasing function of the average transmit power. However, due to the amplitude limitation $A$, the peak power is limited. Therefore, the PAPR is a measure of the efficiency of the signal set (code in the signal space), and its minimization corresponds to maximizing the average transmit power. Several techniques have been proposed to tackle this problem such as coding, tone reservation, amplitude clipping, clipping and filtering, tone injection and partial transmit sequences. A summary of important results can be found in the survey papers \cite{wunder2013papr, han2005overview}. 
There is another collection of work on designing good codes with reasonable PAPR as in the papers \cite{davis1997peak, ochiai1997block, van1996ofdm, paterson2000generalized} (see also \cite{paterson2000existence} and the refs. therein). While these codes reduce PAPR, they also reduce the transmission rate severely, especially  for very large number of subcarriers.
%

\noindent{\bf Contribution.}
In this paper, we make a connection between the  OFDM for traditional \textit{radio frequency} (RF) channels and OFDM for optical intensity channels (OC) under the intensity constraint $A$ (see \cite{you2002upper, hranilovic2004capacity} and references therein). 
We show that the encoding set for both cases has a \textit{semi-definite programming} (SDP) representation, which we use to study the capacity scaling performance for both channels. In particular, we show that even though OFDM achieves the optimal DoF scaling $WT$ for OC in high SNR, it seems that, the best possible DoF in moderate SNR is at most $\lambda W T$, with a multiplicative loss $\lambda \in (0,1)$. Interestingly, this confirms a recent result of \cite{ilic2009papr}, where the authors using the results of Lyubarskii and Vershynin in \cite{lyubarskii2010uncertainty}, show (and numerically verify) that if the columns of a subsampled Fourier matrix build a tight frame, then it is possible to make all codewords have a constant PAPR at the cost of a multiplicative rate loss due to those carriers reserved for shaping (\textit{Tone-Reservation} (TR)). We study this problem further, and provide additional evidence that the constant PAPR seems to be achievable, by formulating it as the statistical width of a specific convex set. Our results suggest that, in OFDM systems with a large number of subcarriers, TR is a promising approach in order to achieve an effective PAPR reduction at the cost of a fixed multiplicative loss of DoFs.

\section{Statement of the Problem}

%
\subsection{Basic Setup}
Let $\bfx=(x_1, x_2, \dots, x_n) \in \bC^n$ be a sequence of symbols of length $n$ to be transmitted across a CTAWGN channel. In the OFDM modulation with $n$ subcarriers, the sequence $\bfx$ is transformed to a base-band signal given by
\begin{align}\label{eq:bb_sig}
s_b(t)=\sum_{k\in [n]} x_k e^{j2k\pi  t}, \ t \in [0,1).
\end{align}
Here, for convenience, we consider a normalized signal set with a normalized transmission time $T = 1$ and amplitude $A=1$. We will re-scale the results at the end to make the main system parameters $W$, $T$ and $A$ and $N_0$ appear explicitly. 
The harmonic wave-forms $\clW_\sfH=\{e^{j2k\pi t}\}_{k=1}^n$ in \eqref{eq:bb_sig} form an orthonormal collection with minimum frequency separation $1$ over $[0,1)$, under the inner product $\inp{\alpha(t)}{\beta(t)}=\int_{0}^1 \alpha(t)^\herm \beta(t) dt$.

We focus on two main applications of OFDM: 1) in passband modulation for RF channels, and 2) in base-band modulation for OC over intensity modulated channels. In the former, the baseband signal $s_b(t)$ is heterodyned to a sufficiently high \textit{normalized} carrier  frequency $f_c \gg  n$, and the resulted real-valued passband signal $s(t)=\Re[s_b(t) e^{j 2\pi f_c t}]$ is transmitted via the antenna (see Fig.~\ref{fig:ofdm_rad_opt_and_ctawgn}). The passband signal $s(t)$ occupies a normalized one-side bandwidth $W=n$ (or more precisely $n+1$) for sufficiently large block-length $n$. For OC, the real-valued signal $s(t)=1+\Re[s_b(t)]$ is used to modulate the light intensity of an optical diode, where we assume that, due to the unipolar nature of the diode, an additional normalized d.c. level of size $1$ is added to obtain a positive signal $s(t)$ (see Fig.~\ref{fig:ofdm_rad_opt_and_ctawgn}).
\begin{figure}[h]
\centering
\resizebox{8.5cm}{!}{

\begin{tikzpicture}[
block/.style ={rectangle, draw=blue, fill=white, thick, align=center, rounded corners, minimum height=2em},
block2/.style ={rectangle, fill=white, align=center, minimum height=2em}
]
\pgfmathsetmacro{\scale}{0.85}
\pgfmathsetmacro{\xshift}{7cm}
\pgfmathsetmacro{\yshift}{0.05cm}
\pgfmathsetmacro{\vshift}{3}

\begin{scope}[scale=\scale]
\node[block2] at (-1,0) (x_seq) {$x_n,x_{n-1}, \dots, x_1$};
\node[block] at (2,0) (mod) {Modulator};
\node at (4.5,0) (hdyn) {{\large$\otimes$}};
\node[block2] at (4.5,-1) (carr) {$e^{j 2\pi f_c t}$};
\node[block] at (6,0) (re) {$\Re[.]$};

\draw[->,thick] (x_seq) -- (mod);
\draw[->, thick] (mod) --  (hdyn) node[midway, above] {$s_b(t)$};
\draw[->, thick] (carr) -- (hdyn);
\draw[thick, ->] (hdyn) -- (re);
\draw[thick] (re) -- (7,0) -- (7,0.8);
\draw[thick] (7,0.8) -- (7.2,1);
\draw[thick] (7,0.8) -- (6.8,1) node[above, left]{Antenna};
\end{scope}

\begin{scope}[scale=\scale, yshift=-\vshift cm]
\node[block2] at (-1,0) (x_seq) {$x_n,x_{n-1}, \dots, x_1$};
\node[block] at (2,0) (mod) {Modulator};
\node[block] at (4.5,0) (re) {$\Re[.]$};
\node[block2] at (6,0) (dc) {\large $\oplus$};

\draw (x_seq) -- (mod);
\draw[thick, ->] (mod) -- (re) node[midway, above] {$s_b(t)$};
\draw[thick,->] (re) -- (dc);
\draw[thick,->] (6,-1) node[below] {d.c. level} -- (dc);
\draw[thick] (dc) -- (7,0) -- (7,1);

\draw[fill] (7,1-0.2)--(7.15,1-0.2) -- (7,1.2-0.2) -- (6.85,1-0.2) -- (7,1-0.2);
\draw[thick] (6.85,1.2-0.2) -- (7.15,1.2-0.2);
\draw[thick] (7,1.2-0.2) -- (7,1.5-0.2);
\draw (6.7,1.5-0.2) -- (7.3,1.5-0.2);
\draw (6.85,1.6-0.2) -- (7.15,1.6-0.2);
\draw (6.75,1) node [left] {Photo diode};
\end{scope}

\begin{scope}[xshift=\xshift, yshift=\yshift]
\draw[->, thick] (0,0) node[fill=white] {$x_1$} -- (1.8,0);
\draw (2,0) circle (0.2) node{$+$};
\draw[->, thick] (2,1) node[fill=white] {$w_1 \sim \cg(0, N_0)$} -- (2,0.2);
\draw[->, thick] (2.2,0) -- (4,0) node[right] {$y_1$};

\node at (0,-1) {$\vdots$};
\node at (2,-1) {$\vdots$};
\node at (4,-1) {$\vdots$};

\draw[->, thick] (0,0-\vshift) node[fill=white] {$x_{n}$} -- (1.8,0-\vshift);
\draw (2,0-\vshift) circle (0.2) node{$+$};
\draw[->, thick] (2,1-\vshift) node[fill=white] {$w_{n} \sim \cg(0, N_0)$} -- (2,0.2-\vshift);
\draw[->, thick] (2.2,0-\vshift) -- (4,0-\vshift) node[right] {$y_{n}$};
\end{scope}

\end{tikzpicture}
}
\caption{OFDM for RF and OC channels, and the equivalent discrete-time model.}
\label{fig:ofdm_rad_opt_and_ctawgn}
\end{figure}
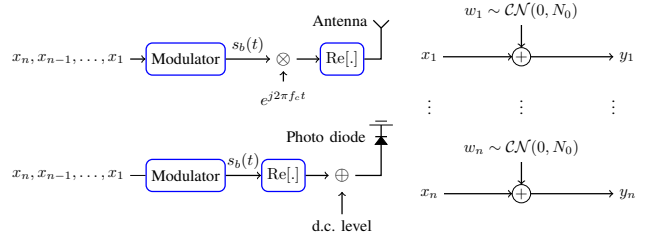
In both cases, the OFDM symbol $\bfx$ is detected at the receiver via an electronic circuit and matched filtering. The equivalent discrete-time channel can be modelled as parallel independent complex-valued Gaussian channels. This has been shown in Fig.~\ref{fig:ofdm_rad_opt_and_ctawgn}, where $\frac{N_0}{2}$ denotes the power spectral density of the channel.

\subsection{Behaviour for Large Number of Subcarriers}
In this paper, we are interested in the regime where $n$ is very large. Although in the ergodic regime, coding across many consecutive OFDM blocks can be used to further boost the performance, here we mainly focus on the \textit{one-shot} behaviour of the system where an individual OFDM block carries a huge number of symbols. This may be motivated by the recent LTE standards in mobile communications. We are mainly interested to know how the one-shot achievable rate per unit time scales in terms of $W$, $T$, $A$ and noise parameter $N_0$.

\section{Encoding Space}
Let $\bfa: [0,1) \to \bC^n$ be the vector-valued function given by 
\begin{align}\label{a_def}
\bfa(t)=[e^{-j2\pi t}, e^{-j4\pi t}, \dots, e^{-j2n\pi t}]^\transp, \ t \in [0,1).
\end{align}
The baseband signal \eqref{eq:bb_sig} can be written as $s_b(t)=\bfa(t)^\herm \bfx$. We define the encoding space for RF and OC as
\begin{align}
\RF&=\{\bfx \in \bC^n: |\bfa(t)^\herm \bfx|\leq 1\text{ for } t\in [0,1)\},\label{rf_seq}\\
\OC&=\{\bfx \in \bC^n: 1+\Re[\bfa(t)^\herm \bfx] \geq 0\text{ for } t\in [0,1)\},\label{oc_seq}
\end{align}
where we assume that the amplitude of the RF signal and the d.c. level of the OC signal are both normalized to $1$. 
A code of rate $R$ for communication over AWGN via OFDM signalling is a mapping $\scrC_{\mathsf{RF}} : [2^{nR}] \to \RF$ and $\scrC_{\mathsf{OC}} : [2^{nR}] \to \RF$. 
Both $\RF$ and $\OC$ are not polyhedral, however, they can be  represented as \textit{Linear Matrix Inequalities} (LMI) over the cone of \textit{positive semi-definite} (PSD) matrices, and they have \textit{semi-definite programming} (SDP) representation.
We first consider $\RF$. As a corollary to Theorem 4.24 in \cite{dumitrescu2007positive}, we have:
\begin{proposition}\label{rf_mem}
Let $\bfx \in \bC^n$. Then, $\bfx \in \RF$ if and only if there exists a Hermitian matrix $\bfH \in \bC^{n\times n}$ such that 
\begin{align}\label{eq:psd_char}
\left [\begin{matrix} \bfH & \bfx \\ \bfx^\herm & 1 \end{matrix}\right ] \succeq 0, \sum_{\ell=1}^{n-i} \bfH_{\ell,\ell+i}=\left\{ \begin{array}{ll} 1, & i=0, \\ 0, & i=1, \dots, n-1. \end{array} \right.
\end{align}
\end{proposition}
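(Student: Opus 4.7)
The plan is to reduce the amplitude constraint $|\bfa(t)^\herm\bfx|\le 1$ on $[0,1)$ to the non-negativity of the trigonometric polynomial $q(t):=1-|\bfa(t)^\herm\bfx|^2$, which has degree at most $n-1$, and then invoke the Gram-matrix parametrization of non-negative trigonometric polynomials (Theorem~4.24 of \cite{dumitrescu2007positive}). The bookkeeping identity that glues the two sides together is
\[
\bfa(t)^\herm \bfH\bfa(t)=\sum_{k,\ell=1}^n H_{k,\ell}\,e^{j2\pi(k-\ell)t},
\]
so the Fourier coefficient of $e^{-j2\pi i t}$ in this expansion equals the superdiagonal sum $\sum_{\ell=1}^{n-i} H_{\ell,\ell+i}$. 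Combined with the Hermiticity of $\bfH$, the linear constraints in \eqref{eq:psd_char} therefore read exactly $\bfa(t)^\herm\bfH\bfa(t)\equiv 1$. Analogously, $|\bfa(t)^\herm\bfx|^2=\bfa(t)^\herm(\bfx\bfx^\herm)\bfa(t)$.

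For the easy ``if'' direction, a Schur complement on the $2\times 2$ block matrix (whose $(2,2)$ block is $1>0$) gives
\[
\left[\begin{matrix}\bfH & \bfx\\ \bfx^\herm & 1\end{matrix}\right]\succeq 0\ \Longleftrightarrow\ \bfH-\bfx\bfx^\herm\succeq 0.
\]
Sandwiching the right-hand inequality between $\bfa(t)^\herm$ and $\bfa(t)$ and using the constraint $\bfa(t)^\herm\bfH\bfa(t)\equiv 1$ yields $1-|\bfa(t)^\herm\bfx|^2\ge 0$ for every $t$, i.e.\ $\bfx\in \RF$.

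For the converse, I assume $\bfx\in \RF$, so that $q(t)$ is a non-negative trigonometric polynomial of degree at most $n-1$. By the Gram/Fejér--Riesz parametrization cited from \cite{dumitrescu2007positive}, there exists a Hermitian PSD matrix $\bfG\in\bC^{n\times n}$ whose superdiagonal sums are the Fourier coefficients of $q(t)$, equivalently $\bfa(t)^\herm\bfG\bfa(t)=q(t)$. Setting $\bfH:=\bfG+\bfx\bfx^\herm$ produces a Hermitian matrix satisfying $\bfH-\bfx\bfx^\herm=\bfG\succeq 0$ (so the block-PSD condition of \eqref{eq:psd_char} holds) and
\[
\bfa(t)^\herm\bfH\bfa(t)=q(t)+|\bfa(t)^\herm\bfx|^2\equiv 1,
\]
which, via the identity above, is the remaining diagonal-sum half of \eqref{eq:psd_char}.

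The only genuine analytic content is the Gram representation of non-negative trigonometric polynomials, and that is exactly the imported theorem; everything else is Schur complement plus coefficient bookkeeping. The one place where care is needed — and the main obstacle I would watch for — is the index matching between Fourier coefficients of $\bfa(t)^\herm\bfH\bfa(t)$ and the $i$-th superdiagonal sums, since the direction of superdiagonal (i.e.\ whether the constraint reads $\sum H_{\ell,\ell+i}$ or $\sum H_{\ell+i,\ell}$) differs by a complex conjugate and must be reconciled using Hermiticity in both directions of the equivalence.
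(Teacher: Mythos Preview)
Your proof is correct and matches the paper's approach: the paper proves only the ``if'' direction, via exactly the Schur complement argument you give (obtain $\bfH-\bfx\bfx^\herm\succeq 0$ and sandwich by $\bfa(t)$), and defers the converse to Theorem~4.24 of \cite{dumitrescu2007positive}. Your write-up simply makes that deferred step explicit via the Gram parametrization of the non-negative polynomial $q(t)=1-|\bfa(t)^\herm\bfx|^2$, so there is no substantive difference.
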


\begin{proof}
We only prove one direction by simply applying the Schur's decomposition to the  PSD matrix stated in the SDP constraint. Thus, we obtain $\bfH - \bfx \bfx^\herm \succeq 0$, which implies that for every $\bfw \in \bC^n$, we have $\bfw^\herm \bfH \bfw \geq |\bfw^\herm \bfx|^2$. Fixing an arbitrary  $t \in [0,1)$ and setting $\bfw=\bfa(t)\in \bC^n$, we obtain that $\bfw^\herm \bfH \bfw=1$, and as a result $|\bfa(t)^\herm \bfx| \leq 1$. Since this is true for every $t \in [0,1)$, the proof is complete.
\end{proof}

We can obtain a similar representation for the set $\OC$. We need some notations first. We define $\clS_+$ as the set of all $(n+1)\times (n+1)$ Hermitian PSD matrices, and $\clSu=\{\bfX \in \clS_+: \trace[\bfX]=1\}$ as the affine subspace of $\clS_+$ with unit trace. Note that $\clSu$ is a closed convex subset of $\clS_+$. Moreover, for any $\bfX\in \clSu$, we have $ \trace[\bfX]=\sum_{i\in[n+1]} \lambda_i=1$, where $\lambda_i\geq 0$ denote the eigen-values of $\bfX$.  This implies that $\clSu$ is also bounded. We define $\dsum: \clSu \to \bC^n$ as the sum-diagonal map, which for  every $\bfX \in  \clSu$ and for $k \in [n]$, gives a vector $\dsum(\bfX)$ by
\begin{align}\label{dsum_def}
[\dsum(\bfX)]_k=2\,\diag(\bfX,-k):=\sum_{\ell=k}^{n+1} 2\,\bfX_{\ell,\ell-k+1}.
\end{align}
As a corollary to Theorem 3 in \cite{davidson2002linear}, we have the following representation of $\OC$.
\begin{proposition}\label{oc_mem}
Let $\bfx \in \bC^n$. Then, $\bfx \in \OC$ if and only if there is an $\bfX \in \clSu$ such that $\bfx=\dsum(\bfX)$. \hfill $\square$
\end{proposition}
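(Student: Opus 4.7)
My plan is to recast the defining condition of $\OC$ as the non-negativity of a real trigonometric polynomial, and then invoke the Gram-matrix (SDP) representation of such polynomials, which is precisely Theorem~3 of \cite{davidson2002linear}. For any $\bfx \in \bC^n$, I would first observe that
\begin{align*}
q(t) \;:=\; 1 + \Re[\bfa(t)^\herm \bfx] \;=\; 1 + \half \sum_{k=1}^{n}\bigl(x_k\, e^{j2\pi k t} + \bar x_k\, e^{-j2\pi k t}\bigr)
\end{align*}
is a real-valued trigonometric polynomial of degree at most $n$ whose zero-frequency Fourier coefficient equals $1$ and whose $k$-th positive-frequency coefficient equals $\half x_k$. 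Consequently, $\bfx \in \OC$ if and only if $q(t) \geq 0$ for all $t \in [0,1)$.

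Next, I would apply Theorem~3 of \cite{davidson2002linear}, which characterizes non-negative trigonometric polynomials of degree at most $n$ on the unit circle by the Gram representation $q(t) = \bfV(t)^\herm \bfX \bfV(t)$ for some Hermitian PSD matrix $\bfX \in \bC^{(n+1)\times(n+1)}$, where $\bfV(t) = (1, e^{j2\pi t}, \ldots, e^{j2\pi n t})^\transp$. Expanding this quadratic form, the zero-frequency coefficient equals $\trace(\bfX)$, and each positive-frequency coefficient equals the sum of the entries of $\bfX$ along the appropriate diagonal. Matching these coefficient-by-coefficient with those of $q(t)$ yields the two conditions $\trace(\bfX) = 1$ (so $\bfX \in \clSu$) and $\bfx = \dsum(\bfX)$; the factor $2$ built into the definition \eqref{dsum_def} is precisely what is needed to absorb the $\half$ coming from $\Re[\cdot]$, while the index range for $\ell$ aligns the diagonal sum with the correct Fourier frequency.

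Both implications then follow. For the ``if'' direction, given any $\bfX \in \clSu$ and $\bfx = \dsum(\bfX)$, the identity $q(t) = \bfV(t)^\herm \bfX \bfV(t) \geq 0$ holds by construction (a Hermitian form of a PSD matrix), so $\bfx \in \OC$. For the ``only if'' direction, if $\bfx \in \OC$ then $q \geq 0$, and Davidson's theorem furnishes a PSD matrix $\bfX$ whose diagonal sums are exactly the Fourier coefficients of $q$; the constant-term match forces $\trace(\bfX) = 1$ (no rescaling needed because the constant term of $q$ is already $1$), and the positive-frequency matches yield $\dsum(\bfX) = \bfx$. The only substantive input is the cited SDP representation, which ultimately rests on the Fej\'er--Riesz factorization of non-negative trigonometric polynomials. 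The main point that genuinely requires care is the combinatorial bookkeeping: verifying that the sub-diagonal sum defined in \eqref{dsum_def} matches the correct Fourier coefficient of $\bfV(t)^\herm \bfX \bfV(t)$, and that the factor $2$ in $\dsum$ precisely cancels the $\half$ from $\Re[\cdot]$; I would confirm the convention by checking the identity directly for small cases such as $n=1,2$.
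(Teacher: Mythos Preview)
Your proposal is correct and follows essentially the same route as the paper: both invoke Theorem~3 of \cite{davidson2002linear} and identify the Gram-matrix representation of the non-negative trigonometric polynomial $q(t)=1+\Re[\bfa(t)^\herm\bfx]$ via the vector $\bfV(t)$ (the paper's $\bfb(t)$). The paper only spells out the ``if'' direction explicitly---computing $\bfb(t)^\herm\bfX\bfb(t)$ and matching diagonals, exactly as you do---and leaves the ``only if'' direction to the citation, whereas you articulate both; but there is no substantive difference in approach.
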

\begin{proof}
One side is again easy to prove. Let $t\in[0,1)$ and let $\bfb(t)=[1, e^{j2\pi t}, e^{j 4\pi t}, \dots, e^{j2n\pi t}]^\transp$. It is not difficult to see that $\bfb(t)\bfb(t)^\herm$ is a Toeplitz matrix with $1$ on its main diagonal and with $e^{j2k\pi t}$ and $e^{-j2k\pi t}$ on its $k$-th lower and $k$-th upper diagonal respectively. Now let $\bfx \in \bC^n$ and suppose there is an $\bfX \in \clSu$, with $\bfx=\dsum(\bfX)$. This implies that 
\begin{align}
0 &\leq \bfb(t)^\herm \bfX \bfb(t)=\trace[\bfX\,\bfb(t)\bfb(t)^\herm]\\
&=1+ \sum_{k\in [n]} \diag(\bfX,-k) e^{j 2k\pi t} + \diag(\bfX,k) e^{-j 2k\pi t}\\
&=1+ \Re[\bfa(t)^\herm \dsum(\bfX)]=1+\Re[\bfa(t)^\herm \bfx].
\end{align}
Since this is true for every $t \in [0,1)$, then $\bfx \in \OC$. 
\end{proof}

The next proposition shows some of the properties and also the relation between $\RF$ and $\OC$. 
\begin{proposition}\label{upper_bound}
The sets $\RF$ and $\OC$ are compact and convex subsets of $\bC^n$. Moreover, $\RF=\cap_{\phi \in [0, 2\pi)} \{e^{j \phi} \OC\}$. In particular, $\RF \subset \OC$. \hfill $\square$
\end{proposition}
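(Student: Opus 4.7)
The plan is to handle the three assertions in the order they appear: convexity, compactness, and the identity $\RF=\cap_\phi e^{j\phi}\OC$ (with the inclusion $\RF\subset\OC$ as an immediate corollary).

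For convexity, I would observe that each set is an intersection over $t\in[0,1)$ of an elementary convex constraint on $\bfx$: the disk condition $|\bfa(t)^\herm \bfx|\le 1$ is convex because $\bfx\mapsto \bfa(t)^\herm\bfx$ is $\bC$-linear and the closed unit disk in $\bC$ is convex, while $1+\Re[\bfa(t)^\herm\bfx]\ge 0$ is a closed real half-space in $\bC^n\cong\bR^{2n}$. The intersection of arbitrary families of convex (resp.\ closed) sets is convex (resp.\ closed), so both $\RF$ and $\OC$ are closed and convex.

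For compactness, since we work in finite dimension, it suffices to show boundedness. For $\RF$, Parseval applied to the orthonormal family $\{e^{j2k\pi t}\}_{k=1}^n$ gives
\begin{equation*}
\|\bfx\|_2^2 \;=\; \int_0^1 |\bfa(t)^\herm\bfx|^2\, dt \;\le\; 1,
\end{equation*}
so $\RF$ lies in the closed unit ball. For $\OC$, the cleanest route is to invoke \Proposition{oc_mem}: $\OC=\dsum(\clSu)$, and $\clSu$ is closed and bounded (unit-trace PSD matrices satisfy $\|\bfX\|_F\le 1$), while $\dsum$ is a continuous linear map, so its image is compact. Alternatively, one can argue directly: writing $h(t):=1+\Re[\bfa(t)^\herm\bfx]\ge 0$, the fact that $h$ is a non-negative trigonometric polynomial of degree $\le n$ with $\int_0^1 h\,dt=1$ forces every Fourier coefficient of $h$ to be bounded by $1$ in modulus, which in turn bounds each $|x_k|$.

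The heart of the proposition is the identity $\RF=\cap_{\phi\in[0,2\pi)} e^{j\phi}\OC$. The plan is to reduce it to the elementary scalar fact that, for any $z\in\bC$,
\begin{equation*}
|z|\le 1 \;\Longleftrightarrow\; \Re[e^{-j\phi}z]\ge -1 \;\text{ for all } \phi\in[0,2\pi),
\end{equation*}
which holds because $\min_\phi \Re[e^{-j\phi}z]=-|z|$. Applying this pointwise in $t$ with $z=\bfa(t)^\herm\bfx$ shows that $\bfx\in\RF$ iff for every $\phi$ and every $t$, $\Re[\bfa(t)^\herm(e^{-j\phi}\bfx)]\ge -1$, which is precisely the statement $e^{-j\phi}\bfx\in\OC$ for every $\phi$, i.e.\ $\bfx\in\cap_\phi e^{j\phi}\OC$. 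The inclusion $\RF\subset\OC$ then follows by specializing to $\phi=0$. I don't foresee a hard step here; the only subtlety is being careful about the order of the two quantifiers (over $t$ and over $\phi$) when exchanging between the two representations, and that is handled automatically since the scalar identity above is valid for each fixed $t$.
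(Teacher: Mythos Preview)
Your argument is correct. For compactness and convexity you do essentially what the paper does (intersection of closed convex constraints, plus the SDP representation for $\OC$), with the added bonus of explicit boundedness arguments via Parseval and via Fourier coefficients of a nonnegative trigonometric polynomial.

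For the identity $\RF=\cap_\phi e^{j\phi}\OC$, your route differs slightly from the paper's. The paper introduces $\clL=\cap_\phi e^{j\phi}\OC$, argues that $\clL$ is the largest phase-invariant subset of $\OC$, and then proves the two inclusions $\RF\subset\clL$ and $\clL\subset\RF$ separately; the second inclusion is obtained by choosing, for each fixed $t$, the phase $\phi=\pi-\angle\bfa(t)^\herm\bfx$ so that the $\OC$ constraint becomes $|\bfa(t)^\herm\bfx|\le 1$. You instead isolate the scalar equivalence $|z|\le 1\iff \min_\phi\Re[e^{-j\phi}z]\ge -1$ and apply it pointwise in $t$, handling both inclusions at once. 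The underlying mechanism is identical (the specific phase that attains the minimum), but your packaging is a bit more direct and avoids the auxiliary ``largest symmetric subset'' language.
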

\begin{proof}
The compactness and the convexity can be directly checked from the definition. However, it is also seen from Proposition \ref{rf_mem} and \ref{oc_mem}, that $\RF$ and $\OC$ are obtained from linear projection of compact and convex subsets of $(n+1)\times (n+1)$ PSD matrices, thus, they must be compact and convex. 

To prove the next part, let $\clL=\cap_{\phi \in [0, 2\pi)}\{e^{j \phi} \OC\}$. Note that $\clL$ is a symmetric set, i.e., $e^{j \phi} \clL=\clL$ for all $\phi \in [0,2\pi)$. It is not also difficult to see that it is the largest symmetric set contained in $\OC$. We use this property to prove the theorem.

First note that since $\RF$ is itself a symmetric set, it must be contained in $\clL$, i.e., $\RF \subset \clL$. 
To prove the other direction, let $\bfx \in \clL$ and let $t \in [0,1)$. From the symmetry of $\clL$, it results that for every $\phi\in [0,2 \pi)$, and in particular, $\phi=\pi-\angle\bfa(t)^\herm \bfx$, the vector $e^{j \phi}\bfx$ belongs to $\clL$, thus, to $\OC$. Hence, we must have $1+\mre{e^{j \phi}\bfa(t)^\herm \bfx} \geq 0$, which implies that $|\bfa(t)^\herm \bfx|\leq 1$. Since this is true for an arbitrary $t \in [0,1)$, we obtain $\clL \subset \RF$. 
This completes the proof.
\end{proof}

\section{Lower and Upper Bound on the Volume}\label{vol_section}
In this section, we drive lower and upper bounds on the volume of $\RF$ and $\OC$ via results in convex geometry. 

\subsection{Real Embedding} 
Since the results can be stated more conveniently in the real-valued case we first embed both sets $\RF$ and $\OC$ in $\bR^{n\times 2}$. We identify every $\bfx \in \bC^n$ by the $n\times 2$ matrix $X=\scrR(\bfx)=[\Re[\bfx], \Im[\bfx]]$, where $\scrR: \bC^n \to \bR^{n\times 2}$ denotes this real embedding. We denote the inverse map by $\scrI: \bR^{n\times 2}\to \bC^n$. It is not difficult to check the isometry 
\begin{align}
\inp{\scrR(\bfx_1)}{\scrR(\bfx_2)}=\inpr{\bfx_1}{\bfx_2}=\Re[\bfx_1^\herm \bfx_2],
\end{align} 
where the first inner product is the conventional inner product between matrices in $\bR^{n\times 2}$ given by $\inp{X_1}{X_2}=\trace[X_1^\transp X_2]$. We define $A(t)=\scrR(\bfa(-t))$, where $\bfa(t)$ is given by \eqref{a_def}. We can check that the columns of $A(t)$ are given by the vector  $\cs(t):=[\cos(2\pi t), \cos(4\pi t), \dots, \cos(2n\pi t)]^\transp$ and $\si(t):=[\sin(2\pi t), \sin(4\pi t), \dots, \sin(2n\pi t)]^\transp$. We also define the matrix $B(t):=[-\si(t),\,\cs(t)]$. We can also see that multiplying the vector $\bfa(t)$ by the constant phase $e^{j \phi}$ is equivalent to rotating the columns of $A(t)$ by a one parameter rotation group. More precisely, we have
\begin{align}\label{ephi_rot}
\scrR(e^{j \phi} \bfa(-t))= \cos(\phi) A(t)+ \sin(\phi) B(t).
\end{align}

\subsection{Polar of a set} 
For a subset $\clC$ of $\bR^{n\times 2}$, we define the polar (symmetric polar) of $\clC$ as 
\begin{align}\label{polar_def}
\clC^\circ=\{Y\in \bR^{n\times 2}: \sup_{X\in \clC} |\inp{X}{Y}| \leq 1\}.
\end{align} 
The polar is typically defined without the absolute value but since we always work with symmetric sets, we keep the absolute value. It is not difficult to see from \eqref{polar_def} that $\clC^{\circ}$ is always closed and convex, and contains the origin. From duality in convex geometry, it results that if $\clC$ is convex, closed and symmetric then $\clC^{\circ \circ}=\clC$. From \eqref{polar_def} it is not also difficult to see that the polar set does not change if we replace $\clC$ by $\closure\{ \conv\{\clC, - \clC\}\}$, where $\closure$ and $\conv$ denote the closure and convex hull operation. This implies that for every set $\clC$, we have $\clC^{\circ \circ}=\closure\{ \conv\{\clC, - \clC\}\}$. Another property that we will use is that if $\clC \subset \clB$ then $\clB^{\circ} \subset \clC^{\circ}$, and $(\lambda\clC)^{\circ}= \frac{1}{\lambda} \clC^{\circ}$.

\subsection{Mahler and Bottleneck Conjecture}
Let $\clK\subset \bR^{n\times 2}$ be a symmetric convex set with a polar set $\clK^{\circ}$. The Mahler volume of $\clK$ is defined as $\nu(\clK)=\vol(\clK)\vol(\clK^{\circ})$, where $\vol$ denotes the volume. Mahler in \cite{mahler1939ubertragungsprinzip} conjectured that for every such $\clK$
\begin{align}
\frac{4^{2n}}{(2n)!} \leq \nu(\clK) \leq \frac{\pi^{2n}}{(n!)^2},
\end{align}
where the upper and the lower bound are achieved for the sphere and the cube respectively (note that in our case the dimension is $2n$, and the conjecture is stated for $2n$).
The upper bound was proven by Santal\'o \cite{santalo2009invariante}, and is known as the Blaschke-Santal\'o inequality. The conjecture for the lower bound is still open but a weak variant of it, known as the \textit{Bottleneck conjecture}, has been proven which implies the Mahler conjecture up to a factor of $(\frac{\pi}{4})^{2n} \gamma_n$, where $\gamma_n$ is monotonic factor that begins at $\frac{4}{\pi}$ and increases to $\sqrt{2}$ as the $n$ goes to infinity \cite{kuperberg2008mahler}. In our case, it gives the lower bound  $\nu(\clK)\geq \sqrt{2} \frac{\pi^{2n}}{(2n)!}$ for any symmetric convex set $\clK$.

\subsection{Lower bound on $\vol(\OC)$} 
Let $\OC$ be encoding set defined by \eqref{oc_seq}. We define the symmetric part of $\OC$ as $\OOC=(-\OC \cap \OC)$, where it is easy to see that
\begin{align*}
\OOC=\{\bfx \in \bC^n: -1\leq \Re[\bfa(t)^\herm \bfx] \leq 1 \text{ for } t\in [0,1)\}.
\end{align*}
Let $\clK_1 \subset \bR^{n\times 2}$ be the real embedding of $\OOC$. It is not difficult to see that  $\clK_1$ can be identified with the polar of the set $\clA_\pm=\cup_{t \in [0,1)}\{A(t), -A(t)\}$. In particular, we have that $\clK_1^\circ=\closure( \conv (\clA_\pm))$. We also define $\clA_+=\cup_{t \in [0,1)}\{A(t)\}$, $\clA=\conv(\clA_+)$, and $\clE=\scrI(\clA)$. It is not difficult to check that the convexity remains invariant under the real embedding $\scrR$ and its inverse $\scrI$. This implies that $\clE=\conv(\cup_{t\in[0,1)}\{\bfa(-t)\})=\conv(\cup_{t\in[0,1)}\{\bfa(t)\})$ due to the periodicity of $\bfa(t)$ in $[0,1)$. 

We first prove that $\clK_1^\circ \subset (\clA-\clA)$, where $(\clA-\clA):=\{X-Y: X,Y \in \clA\}$ denotes the Minkowski difference of $\clA$. We need the following preliminary lemma.
\begin{lemma}\label{origin_inside}
The set $\clA$ contains the origin. 
\end{lemma}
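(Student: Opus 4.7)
The plan is to exhibit the origin as an explicit uniform convex combination of finitely many points of $\clA_+$, exploiting orthogonality of complex exponentials on equispaced grids of the circle.

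Recall from the definitions in the previous subsection that $A(t)=\scrR(\bfa(-t))=[\cs(t),\si(t)]$, whose $k$-th row is $[\cos(2\pi k t),\sin(2\pi k t)]$ for $k\in\{1,\dots,n\}$, so $A(t)$ contains only pure harmonics of nonzero integer frequency. Fix any integer $N\ge n+1$ and set $t_m=m/N$ for $m=0,1,\dots,N-1$. A standard geometric-series evaluation yields
\begin{align*}
\sum_{m=0}^{N-1} e^{j 2\pi k m/N}=0 \qquad \text{for every } k\in\{1,\dots,n\},
\end{align*}
because $k$ is never a positive multiple of $N$. Taking real and imaginary parts shows that each entry of $\sum_{m=0}^{N-1} A(t_m)\in\bR^{n\times 2}$ vanishes, so
\begin{align*}
\zerov=\frac{1}{N}\sum_{m=0}^{N-1} A(t_m)
\end{align*}
realizes the origin as a convex combination of $N$ points of $\clA_+$, hence $\zerov\in\clA=\conv(\clA_+)$.

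There is essentially no analytic obstacle beyond bookkeeping; the whole lemma reduces to the discrete orthogonality of the DFT basis applied coordinatewise. An equivalent continuous-time formulation would write $\int_0^1 A(t)\,dt=\zerov$ and then invoke the fact that in finite dimensions the convex hull of the compact set $\clA_+$ is itself compact, so every barycenter of a probability measure on $\clA_+$ lies in $\clA$. The discrete argument above is preferable because it produces the convex combination directly and sidesteps any closed-hull issue.
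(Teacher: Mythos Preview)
Your proof is correct and rests on the same underlying observation as the paper's: the average of $A(t)$ over the circle vanishes because each entry is a nonconstant trigonometric monomial, and convexity then places that average in $\clA$. The paper carries this out in continuous form, writing $\zerov=\int_0^1 A(t)\,\mu(dt)$ for $\mu$ the uniform measure on $[0,1)$ and invoking convexity of $\clA$ directly; your version discretizes to an $N$-point grid with $N\ge n+1$ and uses DFT orthogonality to obtain $\zerov=\frac{1}{N}\sum_{m=0}^{N-1}A(t_m)$ as an honest finite convex combination.

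The difference is minor but your choice is the cleaner one: the paper's integral argument implicitly relies on the fact that the barycenter of a probability measure supported on $\clA_+$ lies in $\conv(\clA_+)$, which in turn needs $\clA$ to be closed (true here because $\clA_+$ is compact and we are in finite dimensions, but not stated). Your discrete combination lands in $\clA$ by definition of convex hull, with no closure issue to address---exactly the point you flag in your final paragraph.
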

\begin{proof}
Note that $A(t)\in \clA$ for every $t \in[0,1)$. Let $\mu$ be the uniform probability measure over $[0,1)$. Since $\clA$ is convex, it results that ${\bf 0}=\int_0^1 A(t) \mu(dt) \in \clA$, thus, $\clA$ contains the origin. 
\end{proof}

\begin{proposition}\label{K1_A-A}
Let $\clA$ and $\clK_1^\circ$ be as defined before. Then $\clK_1^\circ\subset (\clA - \clA)$. \hfill $\square$
\end{proposition}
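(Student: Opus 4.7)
The plan is to exploit the explicit characterization $\clK_1^\circ=\closure(\conv(\clA_\pm))$ already derived in the paper, and show that $\clA-\clA$ is a closed convex set containing $\clA_\pm$; the inclusion $\clK_1^\circ\subset \clA-\clA$ will then follow by taking closed convex hull on both sides.

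First I would record the structural properties of $\clA-\clA$. Since $\clA_+=\{A(t):t\in[0,1)\}$ is the image of the circle under a continuous map, it is compact; hence $\clA=\conv(\clA_+)$ is the convex hull of a compact set in the finite-dimensional space $\bR^{n\times 2}$, so $\clA$ is itself compact and convex. Consequently, the Minkowski difference $\clA-\clA=\clA+(-\clA)$ is compact (sum of two compacts) and convex (sum of two convex sets), in particular closed and convex.

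Next I would use \Lemma{origin_inside} to show the elementary inclusions $\clA\subset \clA-\clA$ and $-\clA\subset \clA-\clA$. Indeed, since $\mathbf{0}\in\clA$, for every $a\in\clA$ one has $a=a-\mathbf{0}\in\clA-\clA$, and similarly $-a=\mathbf{0}-a\in\clA-\clA$. Because $\clA_+\subset\clA$ and $-\clA_+\subset-\clA$, this gives
\begin{equation*}
\clA_\pm \;=\; \clA_+\cup(-\clA_+) \;\subset\; \clA\cup(-\clA) \;\subset\; \clA-\clA.
\end{equation*}

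Finally, since $\clA-\clA$ is both convex and closed, it contains the closed convex hull of any subset of itself; applying this to $\clA_\pm$ yields
\begin{equation*}
\clK_1^\circ \;=\; \closure\bigl(\conv(\clA_\pm)\bigr) \;\subset\; \clA-\clA,
\end{equation*}
which is the desired inclusion. There is no real obstacle here beyond carefully invoking the lemma that $\mathbf{0}\in\clA$; everything else is a formal consequence of the compactness/convexity of $\clA$ together with the already established identity $\clK_1^\circ=\closure(\conv(\clA_\pm))$.
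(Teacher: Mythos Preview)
Your proof is correct and follows essentially the same approach as the paper: use Lemma~\ref{origin_inside} to get $\clA,-\clA\subset\clA-\clA$, hence $\clA_\pm\subset\clA-\clA$, and then pass to the closed convex hull using the identity $\clK_1^\circ=\closure(\conv(\clA_\pm))$. You are in fact a bit more careful than the paper, which only invokes convexity of $\clA-\clA$ while you also verify closedness (via compactness of $\clA$), which is needed to absorb the closure in $\closure(\conv(\clA_\pm))$.
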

\begin{proof}
From Lemma \ref{origin_inside}, it  results that the convex set $\clA$ contains the origin. This implies that $\clA, -\clA \subset (\clA -\clA)$, and especially $\clA_\pm \subset (\clA -\clA)$. Since $(\clA-\clA)$ is convex, it  results that $\clK_1^\circ=\closure( \conv (\clA_\pm))\subset(\clA-\clA)$. 
\end{proof}

Consider the set $\clE$ defined before, where it is seen that $\clE$ is the convex hull of the periodic curve $\cup_{t\in[0,1)}\{\bfa(t)\}$. Sch\"onberg in \cite{schoenberg1954isoperimetric} using isoperimetric inequalities proved that $\vol(\clE)$ is strictly larger than the volume of convex hull of any other periodic curve with the same length, where the equality holds if and only if the curve is obtained by some linear isometry from the curve $\cup_{t\in[0,1)}\{\bfa(t)\}$. He also showed that $\vol(\clE)=\frac{2^n \pi^n n!}{(2n)!}$. 
Thus, we obtain that $\vol(\clA)=\vol(\clE)=\frac{2^n \pi^n n!}{(2n)!}$. 
Moreover, from Rogers-Shephard inequality \cite{rogers1957difference}, $\vol(\clA-\clA)$ can be upper bounded by ${4n \choose 2n} \vol(\clA)$. Thus, using Proposition \ref{K1_A-A}, we obtain 
\begin{align}\label{vk_circ_upperbound}
\vol(\clK_1^{\circ}) \leq \vol(\clA-\clA) \leq {4n \choose 2n} \frac{2^n \pi^n n!}{(2n)!}.
\end{align}
Since $\clK_1$ and $\clK_1^\circ$ are symmetric and convex, applying the result of Bottleneck conjecture, i.e., $\nu(\clK_1)=\vol(\clK_1)\vol(\clK_1^\circ)\geq \sqrt{2} \frac{\pi^{2n}}{(2n)!}$, and using \eqref{vk_circ_upperbound}, we obtain that 
\begin{align}\label{oc_lb}
\vol(\OC) \geq \vol(\clK_1)\geq \frac{\sqrt{2} \pi^n}{2^n n! {4n \choose 2n}}.
\end{align}

\subsection{Lower Bound on $\vol(\RF)$}
Let $\clK_2$ be the real embedding of $\RF$. Note that from the definition of $\RF$ in  \eqref{oc_seq} and the one-dimensional rotation property mentioned in \eqref{ephi_rot}, we can see that $\clK_2$ is given by the polar of the parametric set 
\begin{align}\label{a_phi}
\clA_\phi:=\cup_{t\in [0,1), \phi\in [0, 2\pi)}\{\cos(\phi) A(t)+ \sin(\phi) B(t)\},
\end{align}
where in the special case $\phi \in\{0,\pi\}$, we obtain $\clA_\pm=\cup_{t \in [0,1)}\{A(t), -A(t)\}$. Let $\clB_\pm=\cup_{t \in [0,1)}\{B(t), -B(t)\}$. It is not difficult to see that finding a lower bound for $\vol(\clK_2)$ at least requires finding an upper bound on the volume of the convex hull of $\clA_\pm \cup \clB_\pm$. This seems quite challenging especially that, in contrary to Rogers-Shephard inequality, there is no universal upper bound on the volume of the convex hull of the union of two different sets in terms of volume of their individual convex hulls.
Instead, we use another approach to find a lower bound on $\vol(\RF)$. 
We consider two grid of size $n$ on $[0,1)$ given by $\clG_1=\{\frac{i-1}{n}: i \in [n]\}$ and $\clG_2=\{\frac{2i-1}{2n}: i \in [n]\}$. We set $\bfG_1$ as a matrix whose columns are given by $\bfa(g)$, $g \in \clG_1$. We define $\bfG_2$ similarly by using grid points from $\clG_2$. It is not difficult to see that $\bfG_1$ coincides with the DFT matrix. In particular, $\bfG_1^\herm \bfG_1=\bfG_2 ^\herm \bfG_2=n \bfI_n$. We also define $\bfK=\frac{1}{n} \bfG_2^\herm \bfG_1$. We can simply check that $\bfK$ is a Toeplitz matrix with $\bfK_{\ell m}=\sfK(\ell -m+1/2)$ where $\sfK(t)=e^{j \pi t} \frac{\sin(n\pi t)}{n\sin(\pi t)}$. Note that $\clG=\clG_1 \cup \clG_2$ is a uniform grid over $[0,1)$ of size $2n$, with an oversampling factor $2$. Let $\overline{\RF}$ be the approximation of $\RF$ via grid $\clG$, i.e., $\overline{\RF}=\{\bfx: \max_{g \in \clG} |\bfa(g)^\herm \bfx| \leq 1\}$.
In \cite{wunder2002peak}, it was shown that for an oversampling factor $\gamma>1$
\begin{align}
\sup_{t\in [0,1)} |\bfa(t)^\herm \bfx| \leq \frac{1}{\cos(\frac{\pi}{2\gamma})} \max_{g \in \clG} |\bfa(g)^\herm \bfx|.
\end{align}
In our case $\gamma=2$, and we obtain that $\RF \subset \overline{\RF} \subset \sqrt{2} \RF$, which implies that $\vol(\RF) \geq 2^{-n} \vol(\overline{\RF})$. 

Let  $\clU=\{\bfx\in \bC^n: \max_{k \in [n]} |x_k|\leq 1\}$ be the complex $\ell_\infty$-ball. Note that $\overline{\RF}=\clC_1\cap \clC_2$, where $\clC_i=\{\bfx: \bfG_i^\herm \bfx \in \clU \}, i\in[2]$. 
To lower bound $\vol(\overline{\RF})$, we need to find a lower bound for the intersection of two convex sets $\clC_1$ and $\clC_2$.  Since both $\clC_1$ and $\clC_2$ are symmetric sets containing the origin, we find a constant $\beta_n\geq 1$ such that
$\frac{1}{\beta_n} \clC_1 \subset \clC_2$, which implies that $\frac{1}{\beta_n} \clC_1 \subset \clC_1 \cap \clC_2$. This has been pictorially shown in Fig.~\ref{fig:ellipse}.
\begin{figure}[h]
\centering
\resizebox{4.5cm}{!}{
\begin{tikzpicture}[scale=0.8]
\pgfmathsetmacro{\vshift}{3}

\begin{scope}[rotate=20]
\draw[fill=gray!10, thick] (0,0) ellipse (3cm and 1cm);
\draw (3,1) node[right, below] {$\clC_1$};
\end{scope}

\begin{scope}[rotate=70]
\draw[fill=gray!10,thick] (0,0) ellipse (3cm and 1cm);
\draw (3,1) node[right, below] {$\clC_2$};
\end{scope}

\begin{scope}[rotate=20]
\clip (0,0) ellipse (3cm and 1cm);
\begin{scope}[rotate=50]
\draw[fill=gray!30] (0,0) ellipse (3cm and 1cm);
\end{scope}
\end{scope}

\begin{scope}[rotate=20]
\draw[thick, dashed] (0,0) ellipse (3cm and 1cm);
\end{scope}
\begin{scope}[rotate=70]
\draw[thick, dashed] (0,0) ellipse (3cm and 1cm);
\end{scope}

\begin{scope}[rotate=20]
\path[fill=gray!50, draw=red, thick] (0,0) ellipse (1.2cm and 0.4cm);
\draw[->, thick, blue] (1,-2).. controls (0.2,-1.5) ..  (0,-0.4);
\draw (1,-2)  node[right ] {$\frac{1}{\beta_n} \clC_1$};
\end{scope}

\draw[->, thick, blue] (-1,2.5).. controls (0,2) ..  (0,1.1);
\draw (-1,2.5) node[above, left] {$\clC_1\cap \clC_2$};
\end{tikzpicture}
}
\caption{Approximating $\clC_1 \cap \clC_2$ by a scaled version of $\clC_1$.}
\label{fig:ellipse}
\end{figure}
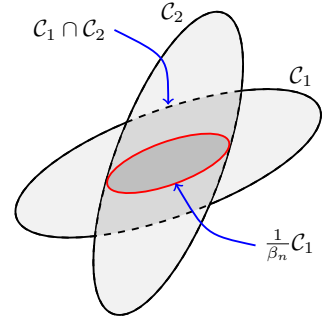
\begin{proposition}\label{beta_factor}
There is a parameter $\beta_n\geq 1$ such that $\frac{1}{\beta_n} \clC_1 \subset \clC_1 \cap \clC_2$. Moreover, $\beta_n\approx \theta \log_2(n)$ for $\theta \approx \frac{\log(2)}{\pi}$ for sufficiently large $n$. \hfill $\square$
\end{proposition}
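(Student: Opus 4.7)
The plan is to change variables through $\bfG_1$, reducing the set inclusion to an $\ell_\infty \to \ell_\infty$ operator-norm estimate for $\bfK$, and then to control that operator norm asymptotically via a Riemann sum against a $\csc$-type integrand.

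First, since $\bfG_1^\herm \bfG_1 = n\bfI_n$, the map $\bfx \mapsto \bfy := \bfG_1^\herm \bfx$ is a bijection of $\bC^n$ with inverse $\bfx = \frac{1}{n}\bfG_1 \bfy$. Hence $\bfx \in \clC_1$ iff $\|\bfy\|_\infty \leq 1$, and substituting $\bfx = \frac{1}{n}\bfG_1 \bfy$ into the defining condition of $\clC_2$ gives $\bfx \in \clC_2$ iff $\|\bfK \bfy\|_\infty \leq 1$. Consequently, the inclusion $\frac{1}{\beta_n}\clC_1 \subset \clC_1 \cap \clC_2$ (the inclusion in $\clC_1$ is automatic for $\beta_n \geq 1$) is equivalent to $\|\bfK\|_{\infty \to \infty} \leq \beta_n$, where the latter denotes the induced complex $\ell_\infty$ operator norm, which is well known to coincide with the maximum absolute row sum of $\bfK$.

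Second, from the explicit form of the entries one has $|\bfK_{\ell m}| = 1/\bigl(n\,|\sin(\pi(2(\ell-m)+1)/(2n))|\bigr)$. Using the symmetry $|\sin(\pi u)| = |\sin(\pi(1-u))|$, one checks that when passing from row $\ell$ to row $\ell + 1$ the entry dropped, namely $|\bfK_{\ell,\,1}|$ with index shift $k = \ell - n$, and the entry added, $|\bfK_{\ell+1,\,n+1}|$ with index shift $k = \ell$, have equal magnitudes. Hence every row has the same $\ell_1$-norm, and it suffices to analyse the single row sum $\sigma_n = \sum_{k=0}^{n-1} 1/\bigl(n\sin(\pi(2k+1)/(2n))\bigr)$.

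Third, $\sigma_n$ is a midpoint Riemann sum of step $1/n$ for the improper integral $\int_0^1 \csc(\pi u)\,du$, evaluated at the nodes $u_k = (2k+1)/(2n)$. Since $\csc(\pi u)$ is integrable on any compact subinterval of $(0,1)$ but has logarithmic singularities at both endpoints, I would split $\sigma_n$ into a bulk part over $u_k \in [1/4, 3/4]$, controlled by the standard midpoint-rule error $O(1/n)$ for smooth integrands, and two edge parts near the singularities, bounded using $\sin(\pi u) \geq 2u$ on $[0, 1/2]$. Using the antiderivative $\int \csc(\pi u)\,du = \frac{1}{\pi}\log|\tan(\pi u/2)|$, one evaluates $\int_{1/(2n)}^{1-1/(2n)} \csc(\pi u)\,du = \frac{2}{\pi}\log n + O(1)$, yielding $\beta_n = \sigma_n = \frac{2}{\pi}\log n + O(1)$, which has the claimed order $\theta \log_2 n$ with $\theta$ a small numerical constant of order $\log(2)/\pi$.

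The main obstacle is the handling of the two endpoint singularities of $\csc(\pi u)$: naive Riemann-sum bounds are inapplicable where the integrand blows up, so the proof must carefully separate the outermost grid points (where $1/\sin(\pi u_k) = \Theta(n)$ and one resorts to direct elementary bounds on $1/\sin$) from the interior points (where the midpoint rule is tight), and the logarithmic contributions of the two endpoints must be added with the correct prefactor to produce the leading constant $2/\pi$. The Toeplitz row-sum invariance and the change-of-variables step are essentially routine once the symmetry $\sin(\pi u) = \sin(\pi(1-u))$ is invoked.
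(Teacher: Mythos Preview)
Your approach is correct and shares the paper's core idea: reduce the inclusion $\frac{1}{\beta_n}\clC_1 \subset \clC_2$ to the bound $\|\bfK\|_{\infty\to\infty} \le \beta_n$ via the change of variables $\bfy=\bfG_1^\herm\bfx$, and then control the maximum absolute row sum of $\bfK$ using the explicit Dirichlet-kernel form of its entries. The paper's estimate at that point is a one-liner --- it bounds each entry by a term of the form $\frac{2}{\pi(2\ell+1)}$ (essentially via $\sin x\gtrsim x$) and sums the resulting harmonic-type series --- whereas you take the more elaborate route of showing that all rows of $|\bfK|$ have the same $\ell_1$-norm, recognising that norm as a midpoint Riemann sum for $\int_0^1\csc(\pi u)\,du$, and handling the two endpoint singularities separately. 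What your extra care buys is the correct leading constant: since $\csc(\pi u)$ has \emph{two} symmetric logarithmic singularities on $[0,1]$, each contributing $\frac{1}{\pi}\ln n$, the true asymptotic is $\beta_n\sim\frac{2}{\pi}\ln n=\frac{2\log 2}{\pi}\log_2 n$, i.e.\ twice the constant $\theta\approx\frac{\log 2}{\pi}$ stated in the paper. Either argument delivers the claimed $\Theta(\log n)$ scaling; the paper's direct termwise bound is shorter, your Riemann-sum analysis is sharper.
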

\begin{proof}
From the definition of $\clC_1$ and $\clC_2$, it is seen that the required scaling factor $\beta_n$ is given by 
\begin{align}
\beta_n&=\max_{\bfx \in \clU} \|\bfG_2^\herm \bfG_1\bfx\|_\infty=\max_{\bfx \in \clU} \|\bfK \bfx\|_\infty\\
&\leq \sum_{\ell \in [n]} |\sfK(\ell+1/2)|\leq \sum_{\ell\in [n]} \frac{2}{\pi(2\ell+1)}\approx \theta \log_2(n),\nonumber
\end{align}
where $\theta\approx \frac{\log(2)}{\pi}$.  This completes the proof.
\end{proof}

\begin{proposition}\label{logn_volume}
For sufficiently large $n$, the volume of $\RF$ is lower bounded by $\frac{\pi^n}{2^n \theta^{2n} \log_2(n)^{2n} n^n}$. \hfill $\square$
\end{proposition}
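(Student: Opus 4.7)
The plan is to assemble the three ingredients already developed in the section: the oversampling relation $\vol(\RF)\geq 2^{-n}\vol(\overline{\RF})$, the intersection bound $\frac{1}{\beta_n}\clC_1\subset\clC_1\cap\clC_2=\overline{\RF}$ from Proposition~\ref{beta_factor}, and an explicit computation of $\vol(\clC_1)$. Putting these together we get
\[
\vol(\RF)\;\geq\;2^{-n}\vol(\overline{\RF})\;\geq\;2^{-n}\beta_n^{-2n}\vol(\clC_1),
\]
so the only non-trivial step left is to evaluate $\vol(\clC_1)$ and then plug in $\beta_n\approx\theta\log_2(n)$.

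To compute $\vol(\clC_1)$, I would use that $\clC_1=\{\bfx\in\bC^n:\bfG_1^\herm\bfx\in\clU\}=(\bfG_1^\herm)^{-1}\clU$, and that the scaled DFT matrix $\frac{1}{\sqrt{n}}\bfG_1$ is unitary, so $|\det_\bC(\bfG_1)|=n^{n/2}$. Viewing $\bC^n$ as $\bR^{2n}$, a $\bC$-linear map $T$ has real Jacobian $|\det_\bR(T)|=|\det_\bC(T)|^2$, so $|\det_\bR((\bfG_1^\herm)^{-1})|=n^{-n}$. The set $\clU$ is a product of $n$ complex unit disks, each of area $\pi$, hence $\vol(\clU)=\pi^n$. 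Therefore
\[
\vol(\clC_1)=\frac{\pi^n}{n^n}.
\]
Since scaling by $\beta_n^{-1}$ in $\bC^n$ scales the $2n$-dimensional real volume by $\beta_n^{-2n}$, we obtain
\[
\vol(\overline{\RF})\;\geq\;\vol\bigl(\tfrac{1}{\beta_n}\clC_1\bigr)\;=\;\frac{\pi^n}{\beta_n^{2n} n^n}.
\]

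Combining with the oversampling bound and the asymptotic $\beta_n\approx\theta\log_2(n)$ from Proposition~\ref{beta_factor}, we get, for sufficiently large $n$,
\[
\vol(\RF)\;\geq\;\frac{\pi^n}{2^n\,\theta^{2n}\log_2(n)^{2n}\, n^n},
\]
which is the claimed bound. There is no genuine obstacle here because every piece is already in hand; the only subtle point worth flagging is bookkeeping of the real versus complex Jacobian (the squaring of $|\det_\bC|$) and making sure the scaling $\beta_n^{-1}$ is raised to the real dimension $2n$ rather than the complex dimension $n$. Once these are accounted for, the estimate is purely mechanical.
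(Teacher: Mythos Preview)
Your proof is correct and follows essentially the same argument as the paper: combine the oversampling bound $\vol(\RF)\geq 2^{-n}\vol(\overline{\RF})$, the inclusion $\frac{1}{\beta_n}\clC_1\subset\overline{\RF}$ from Proposition~\ref{beta_factor}, and the computation $\vol(\clC_1)=\pi^n/n^n$ via unitarity of $\frac{1}{\sqrt{n}}\bfG_1$. Your explicit remarks on the real-versus-complex Jacobian and on raising $\beta_n^{-1}$ to the real dimension $2n$ are exactly the bookkeeping the paper leaves implicit.
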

\vspace{1mm}
\begin{proof}
First note that the volume of $\ell_\infty$-ball $\clU$ is given by $\pi^n$. Since $\frac{1}{\sqrt{n}} \bfG_i$ are unitary matrices, we have $\vol(\clC_i)=\frac{\vol(\clU)}{(\sqrt{n})^{2n}}=\frac{\pi^n}{n^n}$. Using Proposition \ref{beta_factor}, we obtain 
\begin{align}
\vol(\RF) \geq 2^{-n} \vol(\overline{\RF})\geq \frac{\vol(\clC_1)}{2^{n}\beta_n^{2n}}= \frac{\pi^n}{2^n \theta^{2n} \log_2(n)^{2n} n^n}, \nonumber
\end{align} 
which is the desired result.
\end{proof}

\subsection{Upper Bound on the Volume}
In \cite{you2002upper}, it was shown that the encoding set $\OC$ is contained in $\clE=\conv(\cup_{t\in [0,1)}\{\bfa(t)\})$ as defined before. This can be written as 
\begin{align}\label{E_set}
\clE=\big \{\bfx \in \bC^n:\,  &\exists\, \mu \text{ such that} \int_{0}^1 \bfa(t) \mu(dt)=\bfx\big \},
\end{align}
where $\mu$ denotes a probability measure.
This can be easily proved since if $\bfx \in \OC$, then $x(t)=1+\Re[\bfa(t)^\herm \bfx]$ is  positive  with $\int_0^1 x(t) dt=1$. And it can be easily checked that $\mu(dt)=x(t) dt$ as a probability measure over $[0,1)$ satisfies $ \int_{0}^1 \bfa(t) \mu(dt)=\bfx$, thus, $\bfx$ should belong to $\clE$. 
As we explained in Section \ref{vol_section}, the volume of $\clE$ was computed by Sch\"onberg in \cite{schoenberg1954isoperimetric} to be  $\vol(\clE)=\frac{2^n \pi^n n!}{(2n)!}$. Hence, from $\RF \subset \OC \subset \clE$, we have that 
\begin{align}\label{vol_up_bound}
\vol(\RF) \leq \vol(\OC) \leq \frac{2^n \pi^n n!}{(2n)!}.
\end{align}

\section{Capacity Results}\label{cap_results}
\subsection{Lower Bounds on the Capacity}
 We define $\clP_\RF=\{p: \supp(p) \subset \RF\}$, as the set of all probability measures supported on $\RF$. We define $\clP_\OC$ similarly.
Since $\RF \subset \OC$, we have $\clP_\RF \subset \clP_\OC$. 
Let $p\in \clP_\RF$, and let $\bfx \sim p$. We define the mutual information between the input and output of the channel by
\begin{align}
I_p(\bfx;\bfy)=h_p(\bfy)-h_p(\bfy|\bfx)=h_p(\bfy)-h(\bfw),
\end{align}
where $h_p(.)$ denotes the differential entropy under $\bfx \sim p$, and where $\bfw$ is the additive noise vector (see Fig.~\ref{fig:ofdm_rad_opt_and_ctawgn}). Note that since $\RF$ is a compact set, $\bfy=\bfx+\bfw$ has a well-defined covariance matrix $\Sigmam_y=\Sigmam_x+N_0 \bfI_n$, thus, 
\begin{align}
h_p(\bfy) \leq h(\bfy_g)=\log_2((\pi e)^n |\Sigmam_y|) <\infty,
\end{align}
where $\bfy_g$ is a complex Gaussian vector with the same covariance as $\Sigmam_y$. This implies that $\bfI_p(\bfx;\bfy)$ is always well-defined for every $p \in \clP_\RF$. We also define 
\begin{align}\label{capn}
C^{(n)}_\RF=\sup_{p\in \clP_\RF} I_p(\bfx;\bfy).
\end{align}
\begin{proposition}\label{cap_lower}
Let $C^{(n)}_\RF$ be defined as in \eqref{capn}. Then, we have $C^{(n)}_\RF \geq \log_2(\vol(\RF))-\log_2((\pi e N_0)^n)$. \hfill $\square$
\end{proposition}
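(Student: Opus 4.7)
The plan is to exhibit an explicit input distribution $p \in \clP_\RF$ that achieves the claimed lower bound, namely the uniform distribution on the compact convex body $\RF$. Since $\sup_{p\in \clP_\RF} I_p(\bfx;\bfy)$ is an upper bound on any particular choice, it suffices to evaluate (and lower bound) the mutual information for this single $p$.

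First I would let $\bfx$ be uniformly distributed on $\RF$, so that its differential entropy equals $h_p(\bfx)=\log_2(\vol(\RF))$ by the standard formula for the uniform distribution on a bounded set. Next, I would use the decomposition $I_p(\bfx;\bfy)=h_p(\bfy)-h(\bfw)$ already stated in the text, and compute $h(\bfw)=\log_2((\pi e N_0)^n)$ using the fact that $\bfw$ is a circularly symmetric complex Gaussian vector in $\bC^n$ with per-component variance $N_0$ (so each of the $n$ complex components contributes $\log_2(\pi e N_0)$ nats/bits under the conventions of the paper).

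The only remaining ingredient is a lower bound on $h_p(\bfy)$. Here I would invoke the standard inequality
\begin{equation}
h_p(\bfy)=h_p(\bfx+\bfw)\geq h_p(\bfx+\bfw\mid \bfw)=h_p(\bfx),
\end{equation}
which uses the independence of $\bfx$ and $\bfw$ together with the translation invariance of differential entropy. Substituting the three pieces yields $I_p(\bfx;\bfy)\geq \log_2(\vol(\RF))-\log_2((\pi e N_0)^n)$, and taking the supremum in the definition of $C^{(n)}_\RF$ gives the claim.

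No real obstacle is expected; the argument is a textbook ``uniform-input lower bound'' that relies only on (i) compactness of $\RF$ (which is guaranteed by Proposition~\ref{upper_bound} and ensures $\vol(\RF)<\infty$ and a well-defined uniform distribution) and (ii) the fact that conditioning reduces differential entropy. The only minor care needed is to confirm the normalization of the Gaussian differential entropy in the paper's (per-component) convention, but this is exactly what makes the $(\pi e N_0)^n$ factor appear.
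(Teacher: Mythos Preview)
Your proposal is correct and essentially identical to the paper's proof: the paper also takes the uniform distribution $p_u$ on $\RF$, writes $C^{(n)}_\RF\geq I_{p_u}(\bfx;\bfy)=h_{p_u}(\bfy)-h(\bfw)\geq h_{p_u}(\bfy|\bfw)-h(\bfw)=h_{p_u}(\bfx)-h(\bfw)$, and then evaluates the two entropies exactly as you describe.
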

\begin{proof}
Let $p_u$ be the uniform probability distribution on $\RF$. Since $\RF$ is compact $p_u$ is well-defined. Moreover,
\begin{align*}
C^{(n)}_\RF&\geq I_{p_u}(\bfx;\bfy)= h_{p_u}(\bfy) - h(\bfw) \geq h_{p_u}(\bfy|\bfw)-h(\bfw)\nonumber\\
&=h_{p_u}(\bfx)-h(\bfw)=\log_2(\vol(\RF))- \log_2((\pi e N_0)^n).
\end{align*}
This completes the proof.
\end{proof}
A  result similar to Proposition \ref{cap_lower} holds for $\OC$. To state the lower bound in terms of the physical parameters including transmission time $T$, bandwidth $W$, amplitude $A$, and noise parameter $N_0$, we need to normalize transmission time by $T$, bandwidth by $W$, take $n=WT$, scale the harmonic basis functions $\psi_k(t)=e^{j2k \pi t}$, $t\in [0,1)$ by $\frac{1}{\sqrt{T}} \psi_k(\frac{t}{T})$, $t\in [0,T)$, normalize  the amplitude of the signal by $A\sqrt{T}$, and keep the noise parameter the same as $N_0$. Applying this normalization, and using $n!\approx \sqrt{2\pi n} (n/e)^n$, for sufficiently large $n=WT$, we obtain the following lower bounds 
for the \textit{capacity per unit time} of RF and OC channels:
\begin{align}
C_\mathsf{OC}(W)&\geq W\Big (\log_2(\frac{A^2}{2WN_0}) -4\Big)\label{cap_oc}\\
C_\mathsf{RF}(W)&\geq W\Big (\log_2(\frac{A^2}{2WN_0})+3 - 2\log_2(\alpha)\Big)\label{cap_rf}.
\end{align} 
The loss $4$ in \eqref{cap_oc} mainly results from Rogers-Shephard inequality that $\vol(\clA-\clA)\leq {4n \choose 2n} \vol(\clA)$ and can be further improved. At least, it can be reduced to $3.03=4-2\log_2(4/\pi)$ if the Mahler conjecture is true.
%

We have the additive term $3\approx \log_2(\frac{1}{e \theta^2})$ in \eqref{cap_rf} but the best value that we could find for $\alpha$ in \eqref{cap_rf} is given by $\log_2(WT)$, which results from the lower bound in Proposition \ref{logn_volume}. 
It seems that finding a universal $\alpha$ independent of $WT$ may not be possible. 
This suggests that the true behaviour of the capacity of the amplitude-limited RF channel in the \textit{one-shot} regime is of the form
$\lambda W\log_2(\frac{A^2}{2WN0})$ for some $\lambda \in (0,1)$, thus, indicating that a loss of DoFs with respect to its power-limited counterpart is unavoidable. 
This would also suggest that an effective and more practical way to approach the \textit{one-shot} capacity of this channel consists in fixing $WT$ to some sufficiently large value, and reserving a fixed fraction of subcarriers 
to keep the signal's PAPR under control. This approach, known as Tone Reservation (TR), has been widely investigated in the literature (see refs. in Section \ref{inro}) and will be treated in a novel way in Section \ref{TR}, by exploiting our 
SDP characterization of the set $\RF$.
%
%

\subsection{Upper Bound on the Capacity}
Using the results in \cite{you2002upper} and upper bounds on the volume in \eqref{vol_up_bound} derived in Section \ref{vol_section}, we obtain an upper bound on the capacity per unit time of OC in high-SNR regime, which also gives an upper bound for the capacity of RF. 
After suitable scaling we have
\begin{align}\label{cap_formula_oc}
C_\mathsf{RF}(W)\leq C_\mathsf{OC}(W) \leq W\log_2(\frac{A^2}{2 W N_0}),
\end{align} 
which shows that the lower bounds in \eqref{cap_oc} is tight for OC up to a finite loss in SNR.

\section{Tone Reservation and the PAPR Problem}\label{TR}
In this section, we investigate tone-reservation algorithm (TR) for an individual OFDM block of large dimension $n=WT$. 
Of course, in practice,
coding across a sufficiently large number of OFDM blocks can be used to further increase the reliability.  
From capacity result in \eqref{cap_rf}, it seems that the loss in SNR given by $\alpha$ might scale as $\log_2(WT)$. Thus,  to compensate the loss in SNR, it might be necessary to encode over blocks with smaller $WT$. For a fixed bandwidth $W$, it implies that the OFDM packets should be made smaller, and coding over consecutive blocks should be used to achieve the optimal scaling $W\log_2(\frac{A^2}{2 W N_0})$. 

Recently, Ilic and Strohmer \cite{ilic2009papr} studied the performance of Tone Reservation (TR) using the results of Lyubarskii and Vershynin in \cite{lyubarskii2010uncertainty}. Although not yet rigorous, their results suggest that a constant PAPR for all codewords is possible provided that a fixed fraction of carriers are devoted to waveform shaping, and PAPR reduction. This shows that, the DoF $WT$ up to a multiplicative loss seems to be achievable.
In this section, using the SDP representation for $\RF$, we prove that the best PAPR for TR is given by the statistical width of an appropriate convex set that we define. 

In TR an OFDM block of length $n$ is divided into two sub-blocks:  a block $\clI\subset [n]$ of size $m=\lambda n$, $\lambda \in (0,1)$, containing the information symbols, and a block $\clR=[n]\backslash \clI$ containing the symbols used for waveform shaping to reduce the PAPR. This reduces the rate by a factor $\lambda$. Moreover, an extra power is transmitted for the redundant symbols. Let $\bfs\in \bC^m$ be the sequence of information symbols. We assume that each component of $\bfs$ is selected from a given signal constellation such as QAM. Using the SDP representation of $\RF$ in \eqref{eq:psd_char}, and after suitable normalization, we obtain the following SDP for the optimal selection of redundant symbols:
\begin{align}\label{pts_opt_rf}
\bfx^*&=\argmin_{\bfx \in \bC^n}\ \trace[\bfH] \text{ subject to } \\
&\left [\begin{matrix} \bfH & \bfx \\ \bfx^\herm & 1 \end{matrix}\right ] \succeq 0, \sum_{\ell=1}^{n-i} \bfH_{\ell,\ell+i}= 0, i\in [n-1],\ \bfx_{\clI}=\bfs,\nonumber
\end{align}
where $\bfx_\clI$ is a sub-vector of $\bfx$ containing the components in location $\clI$. It is not difficult to check that the minimum power loss for transmitting $\bfs$ is given by $\ell(\bfs, \clI):= \frac{\trace[\bfH^*]}{\|\bfs\|^2}$, where $\bfH^*$ is the optimal matrix obtained from \eqref{pts_opt_rf}. We suppose that each symbols $s_i$ in $\bfs$ is generated i.i.d. from a given distribution $p_s$, where $\bE[s_i]=0$ and $\bE[|s_i|^2]=1$. For large $n$, we have that $\ell(\bfs,\clI)\approx \trace[\bfH^*]/m$. Note that $\ell(\bfs, \clI)$ is a random variable depending on the information symbols $\bfs$ and their location $\clI$.

Let $\clC\subset \bC^m$ be a convex set. We define the statistical width of $\clC$ under i.i.d. sampling induced by distribution $p_s(s)$ by
$
\omega(\bfs,\clC):=\sup_{\bfu\in \clC} |\inpr{\bfu}{\bfs}|^2 
$, 
where $\inpr{\bfu}{\bfs}=\Re[\bfu^\herm \bfs]$, and its average by $\overline{\omega}(\clC)=\bE_{\bfs}[\omega(\bfs, \clC)]$. When $p_s$ is a Gaussian distribution, this is known as the squared Gaussian width of the set $\clC$, which plays a crucial role in characterizing the minimum number of measurements in high-dimensional estimation under constraints such as Compressed Sensing \cite{chandrasekaran2012convex, vershynin2014estimation}. 
For a given position of information symbols $\clI$, let us define
\begin{align}\label{conv_set_rf}
\clC_\clI=\{\bfu \in \bC^m: \exists\, \bfy\ \text{ s.t. }  \left [\begin{matrix} \bT[1;\bfy] & \bfz(\bfu) \\ \bfz(\bfu)^\herm & 1 \end{matrix}\right ] \succeq 0\},
\end{align}
where $\bT[1;\bfy]$ denotes a Hermitian Toeplitz matrix with first column $[1;\bfy]$, and where $\bfz(\bfu)\in \bC^n$ is a vector containing $\bfu$ in indices corresponding to $\clI$ and zero elsewhere. It is not difficult to check that $\clC_\clI$ is a convex set containing the origin.
Then, we obtain the following proposition. 
\begin{proposition}
Let $\ell(\bfs, \clI)$ be the power-loss (PAPR) as defined before. Then, $\ell(\bfs, \clI)=\omega(\bfs,\clC_\clI)/m$, where $\omega(\bfs,\clC_\clI)$ is the width of $\clC_\clI$ for the $\bfs$ with i.i.d. components $\sim p_s$. \hfill $\square$
\end{proposition}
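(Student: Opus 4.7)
The plan is to identify $\trace[\bfH^*]$ with $\omega(\bfs,\clC_\clI)$ by semidefinite programming strong duality; combined with $\ell(\bfs,\clI)=\trace[\bfH^*]/\|\bfs\|^2$ and the law-of-large-numbers approximation $\|\bfs\|^2\approx m$ used just before the statement, this yields the claim. To set up, I would recast \eqref{pts_opt_rf} as a conic program in the Hermitian variable $M=\begin{bmatrix}\bfH & \bfx\\ \bfx^\herm & 1\end{bmatrix}\succeq 0$ of size $(n+1)\times(n+1)$, with affine constraints $M_{n+1,n+1}=1$, the Toeplitz-sum conditions $L_i(\bfH):=\sum_\ell\bfH_{\ell,\ell+i}=0$ for $i=1,\dots,n-1$, and the pinning constraints $M_{j,n+1}=s_j$ for $j\in\clI$, with linear objective $\trace[\bfH]$.

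Introducing multipliers $\nu\in\bR$, $\mu\in\bC^{n-1}$, and $\lambda\in\bC^m$ for the three groups of constraints above and computing the conic adjoint, the dual SDP reads
\begin{align*}
d^*=\max\;\nu+2\Re[\lambda^\herm\bfs]\ \ \text{s.t.}\ \ \begin{bmatrix}\bfI_n-T(\mu) & -\bfz(\lambda)\\ -\bfz(\lambda)^\herm & -\nu\end{bmatrix}\succeq 0,
\end{align*}
where $T(\mu)$ is the Hermitian Toeplitz matrix with zero diagonal assembled from $\mu$, so that $\bfI_n-T(\mu)$ is a Hermitian Toeplitz matrix with unit diagonal, i.e. of the form $\bT[1;\bfy]$ for a free $\bfy\in\bC^{n-1}$, and $\bfz(\lambda)\in\bC^n$ embeds $\lambda$ on the coordinates indexed by $\clI$. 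Substituting $w:=-\nu\ge 0$ and, for $w>0$, $\bfu:=-\lambda/\sqrt{w}$, the Schur complement reduces the dual PSD block to $\bT[1;\bfy]\succeq\bfz(\bfu)\bfz(\bfu)^\herm$, which is precisely the certificate $\bfu\in\clC_\clI$ of \eqref{conv_set_rf}, while the objective becomes $-w-2\sqrt{w}\,\Re[\bfu^\herm\bfs]$. Maximizing in $w\ge 0$ gives $\max\{0,-\Re[\bfu^\herm\bfs]\}^2$; using that $\clC_\clI$ is phase-symmetric (since $\bfz(e^{j\phi}\bfu)\bfz(e^{j\phi}\bfu)^\herm=\bfz(\bfu)\bfz(\bfu)^\herm$, the same $\bfy$ certifies $e^{j\phi}\bfu$), I rotate $\bfu$ so that $\Re[\bfu^\herm\bfs]=-|\bfu^\herm\bfs|$, after which the supremum over $\bfu$ collapses to $\sup_{\bfu\in\clC_\clI}|\bfu^\herm\bfs|^2=\sup_{\bfu\in\clC_\clI}|\Re[\bfu^\herm\bfs]|^2=\omega(\bfs,\clC_\clI)$.

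Slater's condition for the primal is immediate: $\bfx_\clR=\mathbf{0}$ together with $\bfH=c\,\bfI_n$ for any $c>\|\bfs\|^2$ gives a strictly feasible point, since $L_i(c\bfI_n)=0$ for $i\ge 1$ and $M\succ 0$. Thus strong SDP duality applies and $\trace[\bfH^*]=p^*=d^*=\omega(\bfs,\clC_\clI)$, which yields the proposition. The main technical obstacle is the dual reduction: one has to track the Toeplitz-sum multipliers carefully so that the dual PSD block turns out to be \emph{exactly} a Hermitian Toeplitz matrix with unit diagonal (not a generic PSD matrix), matching the definition of $\clC_\clI$; and one must simultaneously optimize the scaling $w$ and the phase of $\bfu$ to distill the clean quadratic form $|\bfu^\herm\bfs|^2$. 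The remaining book-keeping (the degenerate case $w=0$, and the verification of strong duality) is routine.
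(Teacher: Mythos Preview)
Your proposal is correct and follows essentially the same route as the paper: verify Slater's condition with $\bfH=c\,\bfI_n$, take the SDP dual so that the Toeplitz-sum multipliers force the dual block to be $\bT[1;\bfy]$, rescale so the bottom-right entry is $1$, optimize out the scaling parameter $w$, and use symmetry of $\clC_\clI$ to obtain $\omega(\bfs,\clC_\clI)$. The only cosmetic difference is that you invoke full phase symmetry $e^{j\phi}\bfu\in\clC_\clI$ to pass through $|\bfu^\herm\bfs|^2$, whereas the paper uses just the sign flip $\bfz\mapsto -\bfz$ to reach $|\inpr{\bfz_\clI}{\bfs}|^2$ directly; both arguments land on the same quantity.
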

\begin{proof}
We use the duality to prove the result. First note that the Slater's condition \cite{boyd2004convex} holds for the SDP constraint because for sufficiently large $\tau\in \bR_+$, by setting $\bfH=\tau \bfI_n$, the matrix $\left [\begin{matrix} \bfH & \bfx \\ \bfx^\herm & 1 \end{matrix}\right ]$ will satisfy the constraints and will be strictly positive, i.e., will lie in the relative interior of the convex constrained set. Thus, we have the strong duality without any duality gap. Introducing dual variables $\bfz \in \bC^n$ for the constraint $\bfx_\clI=\bfs$, $\bfy \in \bC^{n-1}$ for the constraints in $\bfH$, and $w$ for the constraint of having $1$ at the intersection of the last row and last column, we obtain the dual of the SDP in \eqref{pts_opt_rf} as follows 
\begin{align*}
\sup_{\bfz\in \bC^{n}} -2\inpr{\bfz_\clI}{\bfs}-w \text{ s.t. } \left [\begin{matrix} \bT[1;\bfy] & \bfz \\ \bfz^\herm & w \end{matrix}\right ] \succeq 0,\  \bfz_{\clR}=0,
\end{align*}
where $\bT[1;\bfy]$ is a $n\times n$ Hermitian Toeplitz matrix whose first column is $[1;\bfy]$, where $[1;\bfy]$ is the column vector obtained by appending $1$ to $\bfy$, and where $\clR=[n]\backslash \clI$ is the position of redundant symbols.  This can be equivalently written as 
\begin{align*}
\sup_{\bfz\in \bC^{n},w} -2\sqrt{w} \inpr{\bfz_\clI}{\bfs}-w \text{ s.t. } \left [\begin{matrix} \bT[1;\bfy] & \bfz \\ \bfz^\herm & 1 \end{matrix}\right ] \succeq 0,  \bfz_{\clR}=0.
\end{align*}
We can simply check that the SDP constraints are symmetric, i.e., if $\bfz$ satisfies the constraint so does $-\bfz$. Hence, we have
\begin{align*}
\sup_{\bfz\in \bC^{n},w} 2\sqrt{w} |\inpr{\bfz_\clI}{\bfs}|-w \text{ s.t. } \left [\begin{matrix} \bT[1;\bfy] & \bfz \\ \bfz^\herm & 1 \end{matrix}\right ] \succeq 0,  \bfz_{\clR}=0.
\end{align*}
Optimizing first with respect to the variable $w$, which gives the optimal value $|\inpr{\bfz_\clI}{\bfs}|^2$, then with respect to the variable $\bfz$, and denoting $\bfz$ with the variable $\bfu \in \bC^m$ in location $\clI$, and $0$ elsewhere (i.e. $\bfz_{\clR}=0$), we obtain 
\begin{align*}
\sup_{\bfu\in \bC^{m}} |\inpr{\bfu}{\bfs}|^2 \text{ s.t. } \left [\begin{matrix} \bT[1;\bfy] & \bfz(\bfu) \\ \bfz(\bfu)^\herm & 1 \end{matrix}\right ] \succeq 0.
\end{align*}
Identifying the constrained set by $\clC_\clI$, the dual-optimal value is indeed $\omega(\bfs,\clC_\clI)$. From strong duality, this value corresponds to the primal-optimal value given by $\trace[\bfH^*]$. Thus, from definition of $\ell(\bfs, \clI)$, it is immediately seen that $\ell(\bfs, \clI)=\omega(\bfs,\clC_\clI)/m$. This completes the proof.
\end{proof}

Although further analysis is needed to study the behaviour of PAPR loss $\omega(\bfs,\clC_\clI)/m$, our numerical simulations in Section \ref{simulation} suggest that it concentrates very well around its average $\overline{\omega}(\clC_\clI)/m$, where the average seems to be a constant. 
In \cite{boche2013peak}, it was proved that if a fraction of subcarriers is reserved to reduce the PAPR so that the constant PAPR criterion is always met by \textit{all} the transmitted waveforms, then the fraction of the remaining subcarriers that can be allocated to the information symbols tends to zero. Our numerical results, however, indicate that this seems to be a worst case measure, and in fact the capacity seems to be nonzero. However, to show this rigorously, we need to prove that for any $\lambda \in (0,1)$, and for $m=n\lambda$, 
\begin{align}
\lim_{n \to \infty} \bP[\overline{\omega}(\clC_\clI)/m \leq \scrK(\lambda)]=1, 
\end{align}
for a fixed function $\scrK:(0,1) \to \bR_+$. We leave this as a future work to be investigated further.

\section{Simulation Results}\label{simulation}
Fig.~\ref{fig:papr_rf} shows the complementary density function (CCDF) of the random variable $\ell(\bfs, \clI)$ for the TR algorithm.  We assume that the transmitter uses a $16$-QAM (quadrature-amplitude modulation). We compare the results with a case in which the symbols are sampled from a circularly symmetric complex Gaussian distribution. The results show a sharp transition in CCDF for a sufficiently large redundancy (fraction of reserved tones).

\begin{figure}[!ht]
\centering
\includegraphics[width=0.5\textwidth]{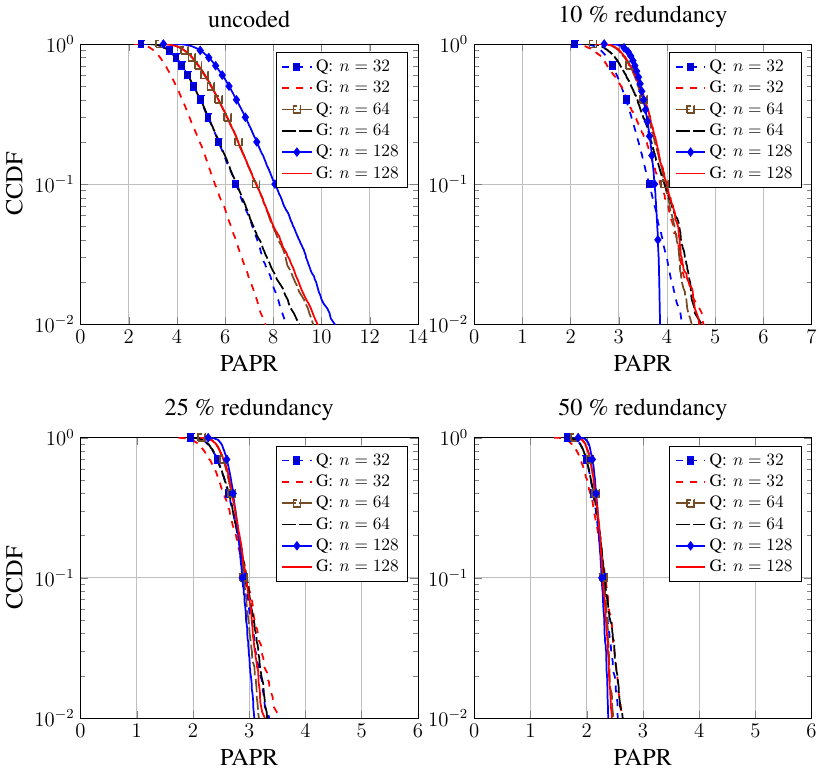}
\caption{PAPR performance of QAM (Q) and Gaussian (G) signals as a function of block-length $n$ and redundancy.}
\label{fig:papr_rf}
\end{figure}

\balance
{\footnotesize
\bibliographystyle{IEEEtran}
\bibliography{references}

\begin{thebibliography}{10}
\providecommand{\url}[1]{#1}
\csname url@samestyle\endcsname
\providecommand{\newblock}{\relax}
\providecommand{\bibinfo}[2]{#2}
\providecommand{\BIBentrySTDinterwordspacing}{\spaceskip=0pt\relax}
\providecommand{\BIBentryALTinterwordstretchfactor}{4}
\providecommand{\BIBentryALTinterwordspacing}{\spaceskip=\fontdimen2\font plus
\BIBentryALTinterwordstretchfactor\fontdimen3\font minus
  \fontdimen4\font\relax}
\providecommand{\BIBforeignlanguage}[2]{{%
\expandafter\ifx\csname l@#1\endcsname\relax
\typeout{** WARNING: IEEEtran.bst: No hyphenation pattern has been}%
\typeout{** loaded for the language `#1'. Using the pattern for}%
\typeout{** the default language instead.}%
\else
\language=\csname l@#1\endcsname
\fi
#2}}
\providecommand{\BIBdecl}{\relax}
\BIBdecl

\bibitem{shannon2001mathematical}
C.~E. Shannon, ``A mathematical theory of communication,'' \emph{Bell Systems
  Tech. J.}, pp. 379--423, 623--656, July and October 1948.

\bibitem{slepian1983some}
D.~Slepian, ``Some comments on fourier analysis, uncertainty and modeling,''
  \emph{SIAM review}, vol.~25, no.~3, pp. 379--393, 1983.

\bibitem{smith1971information}
J.~G. Smith, ``The information capacity of amplitude-and variance-constrained
  sclar gaussian channels,'' \emph{Information and Control}, vol.~18, no.~3,
  pp. 203--219, 1971.

\bibitem{shamai1995capacity}
S.~Shamai and I.~Bar-David, ``The capacity of average and peak-power-limited
  quadrature gaussian channels,'' \emph{Information Theory, IEEE Transactions
  on}, vol.~41, no.~4, pp. 1060--1071, 1995.

\bibitem{mckellips2004simple}
A.~L. McKellips, ``Simple tight bounds on capacity for the peak-limited
  discrete-time channel,'' in \emph{IEEE International Symposium on Information
  Theory}, 2004.

\bibitem{thangaraj2015capacity}
A.~Thangaraj, G.~Kramer, and G.~Bocherer, ``Capacity bounds for discrete-time,
  amplitude-constrained, additive white gaussian noise channels,'' \emph{arXiv
  preprint arXiv:1511.08742}, 2015.

\bibitem{csiszar2011information}
I.~Csiszar and J.~K{\"o}rner, \emph{Information theory: coding theorems for
  discrete memoryless systems}.\hskip 1em plus 0.5em minus 0.4em\relax
  Cambridge University Press, 2011.

\bibitem{farrell2009kashin}
B.~Farrell and P.~Jung, ``A kashin approach to the capacity of the discrete
  amplitude constrained gaussian channel,'' in \emph{SAMPTA'09}, 2009, pp.
  Special--Session.

\bibitem{lyubarskii2010uncertainty}
Y.~Lyubarskii and R.~Vershynin, ``Uncertainty principles and vector
  quantization,'' \emph{Information Theory, IEEE Transactions on}, vol.~56,
  no.~7, pp. 3491--3501, 2010.

\bibitem{kashin1977diameters}
B.~S. Kashin, ``Diameters of some finite-dimensional sets and classes of smooth
  functions,'' \emph{Izvestiya Rossiiskoi Akademii Nauk. Seriya
  Matematicheskaya}, vol.~41, no.~2, pp. 334--351, 1977.

\bibitem{nee2000ofdm}
R.~v. Nee and R.~Prasad, \emph{OFDM for wireless multimedia
  communications}.\hskip 1em plus 0.5em minus 0.4em\relax Artech House, Inc.,
  2000.

\bibitem{wunder2013papr}
G.~Wunder, R.~F. Fischer, H.~Boche, S.~Litsyn, and J.-S. No, ``The papr problem
  in ofdm transmission: New directions for a long-lasting problem,''
  \emph{Signal Processing Magazine, IEEE}, vol.~30, no.~6, pp. 130--144, 2013.

\bibitem{han2005overview}
S.~H. Han and J.~H. Lee, ``An overview of peak-to-average power ratio reduction
  techniques for multicarrier transmission,'' \emph{Wireless Communications,
  IEEE}, vol.~12, no.~2, pp. 56--65, 2005.

\bibitem{davis1997peak}
J.~A. Davis and J.~Jedwab, ``Peak-to-mean power control and error correction
  for ofdm transmission using golay sequences and reed-muller codes,''
  \emph{Electronics Letters}, vol.~33, no.~4, pp. 267--268, 1997.

\bibitem{ochiai1997block}
H.~Ochiai, ``Block coding scheme based on complementary sequences for
  multicarrier signals,'' \emph{IEICE TRANSACTIONS on Fundamentals of
  Electronics, Communications and Computer Sciences}, vol.~80, no.~11, pp.
  2136--2143, 1997.

\bibitem{van1996ofdm}
R.~D. Van~Nee, ``Ofdm codes for peak-to-average power reduction and error
  correction,'' in \emph{Global Telecommunications Conference, 1996.
  GLOBECOM'96.'Communications: The Key to Global Prosperity}, vol.~1.\hskip 1em
  plus 0.5em minus 0.4em\relax IEEE, 1996, pp. 740--744.

\bibitem{paterson2000generalized}
K.~G. Paterson, ``Generalized reed-muller codes and power control in ofdm
  modulation,'' \emph{Information Theory, IEEE Transactions on}, vol.~46,
  no.~1, pp. 104--120, 2000.

\bibitem{paterson2000existence}
K.~G. Paterson and V.~Tarokh, ``On the existence and construction of good codes
  with low peak-to-average power ratios,'' \emph{Information Theory, IEEE
  Transactions on}, vol.~46, no.~6, pp. 1974--1987, 2000.

\bibitem{you2002upper}
R.~You and J.~M. Kahn, ``Upper-bounding the capacity of optical im/dd channels
  with multiple-subcarrier modulation and fixed bias using trigonometric moment
  space method,'' \emph{Information Theory, IEEE Transactions on}, vol.~48,
  no.~2, pp. 514--523, 2002.

\bibitem{hranilovic2004capacity}
S.~Hranilovic and F.~R. Kschischang, ``Capacity bounds for power-and
  band-limited optical intensity channels corrupted by gaussian noise,''
  \emph{Information Theory, IEEE Transactions on}, vol.~50, no.~5, pp.
  784--795, 2004.

\bibitem{ilic2009papr}
J.~Ilic and T.~Strohmer, ``Papr reductioni in ofdm using kashin's
  representation,'' in \emph{Signal Processing Advances in Wireless
  Communications, 2009. SPAWC'09. IEEE 10th Workshop on}.\hskip 1em plus 0.5em
  minus 0.4em\relax IEEE, 2009, pp. 444--448.

\bibitem{dumitrescu2007positive}
B.~Dumitrescu, \emph{Positive trigonometric polynomials and signal processing
  applications}.\hskip 1em plus 0.5em minus 0.4em\relax Springer Science \&
  Business Media, 2007.

\bibitem{davidson2002linear}
T.~N. Davidson, Z.-Q. Luo, and J.~F. Sturm, ``Linear matrix inequality
  formulation of spectral mask constraints with applications to fir filter
  design,'' \emph{Signal Processing, IEEE Transactions on}, vol.~50, no.~11,
  pp. 2702--2715, 2002.

\bibitem{mahler1939ubertragungsprinzip}
K.~Mahler, ``Ein {\"u}bertragungsprinzip f{\"u}r konvexe k{\"o}rper,''
  \emph{{\v{C}}asopis pro p{\v{e}}stov{\'a}n{\'\i} matematiky a fysiky},
  vol.~68, no.~3, pp. 93--102, 1939.

\bibitem{santalo2009invariante}
L.~Santal{\'o}, ``Un invariante afin para los cuerpos convexos del espacio de n
  dimensiones,'' \emph{Portugaliae Mathematica, vol. 8, fasc. 4, p. 155-161},
  2009.

\bibitem{kuperberg2008mahler}
G.~Kuperberg, ``From the mahler conjecture to gauss linking integrals,''
  \emph{Geometric And Functional Analysis}, vol.~18, no.~3, pp. 870--892, 2008.

\bibitem{schoenberg1954isoperimetric}
I.~Schoenberg, ``An isoperimetric inequality for closed curves convex in
  even-dimensional euclidean spaces,'' \emph{Acta mathematica}, vol.~91, no.~1,
  pp. 143--164, 1954.

\bibitem{rogers1957difference}
C.~A. Rogers and G.~C. Shephard, ``The difference body of a convex body,''
  \emph{Archiv der Mathematik}, vol.~8, no.~3, pp. 220--233, 1957.

\bibitem{wunder2002peak}
G.~Wunder and H.~Boche, ``Peak value estimation of band-limited signals from
  their samples with applications to the crest-factor problem in ofdm,'' in
  \emph{Information Theory, 2002. Proceedings. 2002 IEEE International
  Symposium on}.\hskip 1em plus 0.5em minus 0.4em\relax IEEE, 2002, p.~17.

\bibitem{chandrasekaran2012convex}
V.~Chandrasekaran, B.~Recht, P.~A. Parrilo, and A.~S. Willsky, ``The convex
  geometry of linear inverse problems,'' \emph{Foundations of Computational
  mathematics}, vol.~12, no.~6, pp. 805--849, 2012.

\bibitem{vershynin2014estimation}
R.~Vershynin, ``Estimation in high dimensions: a geometric perspective,''
  \emph{arXiv preprint arXiv:1405.5103}, 2014.

\bibitem{boyd2004convex}
S.~Boyd and L.~Vandenberghe, \emph{Convex optimization}.\hskip 1em plus 0.5em
  minus 0.4em\relax Cambridge university press, 2004.

\bibitem{boche2013peak}
H.~Boche and B.~Farrell, ``On the peak-to-average power ratio reduction problem
  for orthogonal transmission schemes,'' \emph{Internet Mathematics}, vol.~9,
  no. 2-3, pp. 265--296, 2013.

\end{thebibliography}
}

\end{document}